\newtheorem{theorem}{Theorem}[section]
\newtheorem{lemma}[theorem]{Lemma}
\newtheorem{asm}{Assumption}
\newcommand{\Paren}[1]{\ensuremath{\left ( #1 \right )}} 
\title{Approximately Optimal Auctions for Selling Privacy \\when Costs are Correlated with Data\footnote{E-mail: \{lkf,yuhanlyu\}@cs.dartmouth.edu. Partially supported by NSF grants CCF-0728869 and CCF-1016778.}}
\author{Lisa Fleischer}
\author{Yu-Han Lyu}
\affil{Department of Computer Science\\ 
            Dartmouth}
\date{\today}
\begin{document}
\maketitle 
\thispagestyle{empty}
\begin{abstract}
We consider a scenario in which a database stores sensitive data of users and an analyst wants to estimate statistics of the data. The users may suffer a cost when their data are used in which case they should be compensated. The analyst wishes to get an accurate estimate, while the users want to maximize their utility. We want to design a mechanism that can estimate statistics accurately without compromising users' privacy.

Since users' costs and sensitive data may be correlated, it is important to protect the  privacy of both data \emph{and} cost. We model this correlation by assuming that a user's unknown sensitive data determines a distribution from a set of publicly known distributions and a user's cost is drawn from that distribution. We propose a stronger model of privacy preserving mechanism where users are compensated whenever they reveal information about their data to the mechanism. In this model, we design a Bayesian incentive compatible and privacy preserving mechanism that guarantees accuracy and protects the privacy of both cost and data. 

\end{abstract}
\clearpage 
\setcounter{page}{1}
\section{Introduction}
Using the Internet, it is fairly easy to collect sensitive personal data. Online service providers implicitly compensate users who provide their personal data, by offering improved services based on their data. However, this implicit exchange may not be fair to the individual, since different people may have different costs --- a loss in expected utility over future events --- for use of their data. Moreover, companies rarely give well-defined guarantees concerning data privacy and compensation. 
When the compensation is less than the individual's perceived cost, 
the individual may choose not to participate. Here, we explore mechanisms to 
fairly compensate individuals for use of their personal data.

In order to motivate users to participate in a mechanism, the payment 
to a user should be at least the cost to the user. Thus, the mechanism 
should learn information about users' costs. Ghosh and Roth~\cite{Roth11} 
initiate a study of this problem. Their mechanism asks users to report 
their costs for the use of their data to estimate statistics, and then
selects some of the users (based on their stated costs) to determine 
the statistics, and pays these users accordingly.
This mechanism is problematic when costs and personal data are correlated,
since users may be reluctant to reveal their costs if they are not
guaranteed adequate compensation up front.  
For example, suppose that a database indicates whether a 
vehicle has been damaged. When the database can be publicly accessed, the 
owner of a damaged car cannot sell the car for the same price as the price 
of an undamaged car. Thus, his cost for revealing data is higher than the 
owner of an undamaged car. Revealing information about the costs may also 
reveal information about whether the car is damaged. Thus, it is important 
to also guarantee privacy of individual payments.

We study this problem where costs are correlated with data. 
We model this correlation by assuming that a user's unknown data 
determines a distribution from a set of accurate and publicly known 
distributions and the user's cost is drawn from that distribution.
We propose a model of a privacy preserving mechanism where users are 
compensated whenever they reveal any information about their data to the 
mechanism, whether directly, or indirectly by revealing their costs. 
In this model, we design a Bayesian incentive compatible and individually rational mechanism, 
which produces accurate statistics and protects the privacy of data and costs.

\vspace{0.5em}\noindent{\bf Problem Setting.} 
There are $n$ users, which we call players, denoted by $[n]$. 
Each player has sensitive data $D_i \in [h]$, stored in a database 
$D \in [h]^n$.  Initially $D_i$ is the private information of player $i$.  
However, since $D_i$ is also in the database, it's value may be verified
with player $i$'s permission.  In addition, player $i$ has a value
for his loss of privacy of his data.  This value $v_i$ is private
to player $i$, but it is correlated with $D_i$.  This correlation
is modeled as follows: 
If $D_i = t \in [h]$ then $v_i \sim F_t$, where $F_t$ is a distribution of
privacy costs for players of type $t$ that is known to all players
and the mechanism.
$F_t$ correctly represents the distributions of costs
of type $t$ players.  

A query is a function $Q : [h]^n \rightarrow \mathbb{R}$, mapping a database to a response.
An example of a query is ``what is the number of people $i$ in the database $D$ with $D_i = j$?".
A data analyst wants $Q(D)$. Since the data are sensitive, 
the data analyst accesses the database through a privacy preserving 
algorithm $A$. Therefore, the data analyst does not receive $Q(D)$ 
but receives an estimate $A(D)$. To ensure the estimate is accurate, 
the error $|Q(D) - A(D)|$ should be small with high probability. 

Differential privacy, introduced in~\cite{Dwork06}, is an accepted way 
to measure privacy and privacy preserving algorithms. Two databases $D$ and 
$D'$ are \emph{adjacent} if they differ in only one entry. An algorithm $A$ 
satisfies \emph{$\epsilon$-differential privacy}, where $\epsilon > 0$, 
if for any pair of adjacent database $D$ and $D'$ and any set 
$I \subseteq \mathbb{R}$, $\Pr[A(D) \in I] \leq e^{\epsilon}\Pr[A(D') \in I]$. 
When $\epsilon = 0$, it implies that the algorithm does not depend on $D$. 
If the error $|Q(D) - A(D)|$ is small with high probability, then the 
algorithm should have large $\epsilon$. Thus, privacy guarantees come at 
the expense of the accuracy.

Although an $\epsilon$-differentially private algorithm can protect
sensitive data, if a player allows his data to be used, he may incur a 
cost. We model this cost as linear in the privacy loss $\epsilon$
and his expected cost $v_i$.\footnote{We can view 
this cost as due to the change in his utility from future events 
that depend on the answer he gives to the analyst. 
This cost is approximately linear in $\epsilon$ and his expected utility, 
denoted by $v_i$.  Let $g(A(D))$ be the distribution of future events that depends on $A(D)$. Let $w_i$ be the player $i$'s utility for future events. Since $A$ is $\epsilon$-differentially private, $g \circ A$ is also $\epsilon$-differentially private. Thus, for random variables $y \sim g(A(D))$ and $y' \sim g(A(D'))$ and event $b$, $\Pr[y = b] \leq e^{\epsilon} \Pr[y' = b]$. Therefore, we have $E_{y \sim g(A(D))}[w_i(y)] - E_{y \sim g(A(D'))}[w_i(y)]$ is approximately $\epsilon E_{y \sim g(A(D'))}[w_i(y)]$ or $-\epsilon E_{y \sim g(A(D'))}[w_i(y)]$, when $\epsilon$ is small.} 
Thus, for player $i$ to agree to the use of his data,
his expected payment should be at least $\epsilon v_i$.


A mechanism specifies a set of actions that players can take. 
The players take actions based on their data and private costs. 
Thus, the input of the mechanism is a database and a vector of actions. 
The outputs are an estimate $\hat{s}$ and a payment vector 
$p = (p_1, \dots, p_n)$. Since player $i$ has a linear cost $\epsilon v_i$, 
the utility of player $i$ is $p_i - \epsilon v_i$ if $D_i$ is used in the 
mechanism, otherwise the utility is $p_i$. 
We assume that all players are rational and want to maximize their utilities. 
A mechanism is a \emph{direct} mechanism if the action set equals the set of all real numbers.
That is, a direct mechanism asks players to report their costs.
A direct mechanism is truthful if every player reports his true cost in order to maximize his utility.
Truth telling is a concept defined for direct mechanisms. 
In this paper, we propose an indirect mechanism. 
Thus, we want to extend the notion of truthfulness to indirect mechanisms.
In our mechanism, there is a straightforward mapping, described in Section \ref{sec:mechanism}, from player's type set to player's action set.
We say that a player \emph{decides truthfully} if he picks the strategy corresponding to his type under this mapping.

In our paper, we will assume that the query/goal of the analyst is to
estimate $n_j = |\{i : D_i = j\}|$.
Without loss of generality, we assume throughout the paper that the 
data analyst wants to estimate $n_1$. 
We seek to design a mechanism with the following properties.
\begin{enumerate}
\item \textbf{Accuracy:} A mechanism $M$ is \emph{$k$-accurate}, if for any 
database $D$, $\Pr[|\hat{s} - n_1| \geq k ] \leq \frac{1}{3}$, when every 
player decides truthfully. Note that the accuracy guarantee is independent 
of the size of the database --- the number $k$ is fixed no matter how large 
the database is, or the sampled set is.
\item \textbf{Differential Privacy:} The estimate and payments satisfy 
$\epsilon$-differential privacy.
\item \textbf{Truthfulness:} A mechanism is \emph{dominant strategy truthful} if, for every player, deciding truthfully maximizes his utility. A mechanism is \emph{Bayesian incentive compatible} (BIC) if, for every player, assuming that other players' costs are drawn from $F$ according to their data and decide truthfully, deciding truthfully maximizes his utility.
\item \textbf{Individual Rationality:} If a player's utility is non-negative, 
then he should be willing to participate. A mechanism is 
\emph{ex-post individually rational} (EPIR) if the utility is non-negative 
for every player when he decides truthfully. A mechanism is 
\emph{ex-interim individually rational} (EIIR) if the expected utility 
is non-negative for every player when he decides truthfully, where the randomness comes from the
mechanism and the costs of other players.
\item \textbf{Payment Minimization:} The summation of payments should be 
as little as possible.
\end{enumerate}

To get permission to use a player's data, the mechanism must compensate
the player by at least his perceived loss of privacy.  But since costs are
correlated with data, players may be reluctant to reveal their true costs,
unless they will be compensated for this.  To avoid this seeming 
chicken-and-egg problem, the mechanism designer cannot resort to the revelation
principle, which states that any mechanism can be realized
as a direct and truthful mechanism.
In fact, \cite{Roth11} prove that if costs and data can be arbitrarily 
correlated and player's cost of privacy can be unbounded, then for any 
$k < n/2$, no $k$-accurate, direct, dominant strategy truthful, EPIR, privacy preserving 
mechanism exists. On the other hand, 
we give a mechanism that provides $k$-accuracy for 
any input value $k$ when costs are correlated with data, and there is no bound on players' cost of privacy.
We get around the lower bound of \cite{Roth11} by using an 
\emph{indirect} mechanism, and modeling the correlation of values
and data via publically known (and allowably unbounded) distributions.

\vspace{0.5em}\noindent{\bf Privacy Issues when Costs are Correlated with Data.}
The objective of a privacy preserving mechanism is that the increase in knowledge about a player's data due to output of the mechanism is small.
Previous work on privacy in statistical 
databases assumes that the mechanism is associated with the database, such that the mechanism can access the whole database without compromising a player's privacy. However, if the mechanism is separated from the database, then a player might not trust the mechanism and might not want to reveal private information to the mechanism.

In our problem, in order to estimate $n_1$, the mechanism should learn 
information about players' data. Suppose that the mechanism has a prior 
belief $G$ about the data in $D$. That is, the mechanism believes that the 
probability of $D_i = j$ is $\Pr_G[D_i = j]$ according to the prior belief. 
The mechanism \emph{learns} about $D_i$ if the mechanism believes that 
$\Pr[D_i = j] \neq \Pr_{G}[D_i = j]$ after running the mechanism, for some 
$j$. There are two possible ways to learn about players' data. The first 
way is to read $D_i$ explicitly. The second way is to read players' 
actions and deduce something about their $D_i$. For example, 
if the mechanism is direct and truthful, then the players report $v_i$ 
truthfully. Suppose that the prior belief is that every player's data are 
drawn from a uniform distribution. That is, $\Pr_{G}[D_i = j]$ is the same 
for all $i$ and $j$. If $F_j(v_i) < F_{j'}(v_i)$ for some $j$ and $j'$, and 
player $i$ truthfully reports $v_i$, then the mechanism's posterior belief is that $\Pr[D_i = j] < \Pr[D_i = j']$, which is different from the prior belief. Learning anything about a player's data may compromise a player's privacy and should be compensated. Thus, there are two kinds of cost to a player that should be compensated, one is for using the player's data and one is for learning about the player's data.

For the latter cost, we propose the concept of perfect data privacy, which is inspired by the concept of perfect objective privacy introduced in~\cite{FJM10}. A mechanism satisfies \emph{perfect data privacy} if whenever the mechanism's posterior belief about a player's data differs from its prior belief, the mechanism pays the player. Under perfect data privacy, mechanisms can learn about a player's cost, as long as that knowledge does not reveal anything about his data. 
However, for a perfectly data private mechanism, if the mechanism learns about a player's data, then the mechanism always compensates the player, even when the mechanism does not not use the player's data to compute the estimate.

\vspace{0.5em}\noindent{\bf Our Main Contribution.}
We give a mechanism that is BIC, EIIR, $O(\epsilon^{-1})$-accurate, perfectly data private, and $\epsilon$-differentially private. 
To achieve our privacy guarantees, we propose a posted-price-like mechanism, described in Section \ref{sec:mechanism}. Given the set of types of players and the distributions of costs, the mechanism writes a contract that offers a different expected payment for each type. Each player is offered this contract. If a player accepts the contract, then his payment is determined by his verifiable type and the payment for his type in the contract. The player's action is either to accept the contract or reject the contract. A player \emph{decides truthfully} if a player with type $j$ accepts the contract when $\epsilon v_i \leq r_j$, where $r_j$ is the payment for type $j$ in the contract. We prove that this posted-price-like mechanism is BIC, EIIR, $O(\epsilon^{-1})$-accurate, perfectly data private, and $\epsilon$-differentially private.

We seek a mechanism with a small payment.  
In Section~\ref{sec:opt}, we define a benchmark for the expected payment of a mechanism and compare the expected payment of our mechanism to this benchmark in two different settings.
When costs are non-negative, we show that our mechanism is close to the benchmark.

We also prove a lower bound on the accuracy that a direct and data 
private mechanism can achieve in Section~\ref{sec:bmd}. 
\subsection{Related Work} 

\noindent{\bf Selling Privacy.}
Our paper is closely related to the privacy preserving mechanisms studied in~\cite{Roth11}. In~\cite{Roth11}, they extend the definition of $\epsilon$-differentially private algorithms to $\epsilon$-differentially private mechanisms. Under their definition of an $\epsilon$-differentially private mechanism, the randomness only comes from the mechanism. In our model, since we want to protect the privacy of the costs, which are drawn from distributions, our definition of an $\epsilon$-differentially private mechanism relies both on the distributions of the costs and the randomness of the mechanism.

\vspace{0.5em}\noindent{\bf Differential Privacy.}
A comprehensive survey of differential privacy appears in~\cite{Dwork08}. 
Most of the previous results are based on random perturbations of the output, 
and assume that the mechanism has the ability to access the 
whole database. If the mechanism cannot access the whole database, Chaudhuri et al.~\cite{ChaudhuriM06} and Klonowski et al.~\cite{Klonowski10} show that random sampling is enough to ensure differential privacy with high probability. That is, it is not necessary to add more noise to the output. 

\vspace{0.5em}\noindent{\bf Differential Privacy and Mechanism Design.}
McSherry et al.~\cite{MT07} use a privacy preserving algorithm as a tool 
to design an approximately dominant strategy truthful mechanism. 
Instead, we focus on treating senstive data as a commodity that can be
sold.

\vspace{0.5em}\noindent{\bf Privacy Concerns in Mechanisms.}
Traditional mechanism design theory focuses on drawing private information from players in order to compute a result. However, if players have privacy concerns, they may not want to reveal their information. Feigenbaum et al.~\cite{FJM10} study how to quantify the information leakage to the mechanism based on communication complexity. 

Xiao~\cite{Xiao11} quantifies the information leakage in a mechanism based on information theory. 
In his model, the outcome of a privacy preserving mechanism not only motivates the players to participate but also protects the private information of players. 
In independent work, Nissam et al.~\cite{Nissim11} and Chen et al.~\cite{Chen11}
consider privacy issues in mechanism design in the context of
elections and discrete facility location.

\vspace{0.5em}\noindent{\bf Posted-Price Mechanisms.}
In a posted-price mechanism, player $i$ is offered a price $r_i$. If player $i$ accepts that price, then $i$ pays $r_i$ to get the allocation. Goldberg et al.~\cite{GH05} show that the posted-price mechanism is collusion resistant. Moreover, the players do not need to know or report their private values precisely. They only decide to accept or reject the price. Chawla et al.~\cite{CH10} point out that this could be useful in reducing the private information revealed to the mechanism.

\vspace{0.5em}\noindent{\bf Revenue Maximization in Bayesian Mechanism Design.}
In a classic paper, Myerson~\cite{My81} characterizes the optimal BIC 
selling mechanism to maximization the expected revenue. In procurement 
mechanisms, each player is a supplier and each player's production cost is 
private information.  The auctioneer is the buyer and wants to minimize the 
expected payment.  In the computer science literature, an early
paper in this area characterizes the minimum-cost dominant strategy truthful auction to buy an
s-t path in a graph~\cite{Elkind04}.  
Since then, there has been considerable interest
in both \emph{frugal} mechanism design (buying a feasible set at low cost), and
budget-constrained mechanism design (buying as good a set as possible subject
to a budget).  Our work can be seen as a generalization of these questions
to the setting of bidders who are reluctant to reveal their costs,
and the feasibility of a set depends on the private costs (via
the correlation with data).
\section{Model and Lower Bound} \label{sec:bmd}
\subsection{Model}
There is a database $D \in [h]^n$ and $n$ players, where each player has data $D_i$. Player $i$ with $D_i = j$ has a private cost $v_i$ drawn from a distribution with cumulative distribution function $F_j$. Note that this definition is different from the traditional definition of a Bayesian setting. In the traditional definition, the distribution of $v_i$ is known to every player and the mechanism. In our definition, the mechanism and players know that each player's $v_i$ is drawn from one of a set of distributions, but the particular distribution depends on the individual player's data, which is unknown to everyone but that player.

The goal of our mechanism is to estimate $n_1$ based on $D$ and determine the payment $p_i$ for every player $i$. 
A mechanism first specifies the set of possible actions $Y$ that players can take. 
Then, based on players' actions and the database, the mechanism determines the estimate and payment. Formally, a mechanism is a function $M : Y^n  \times [h]^n \rightarrow \mathbb{R} \times \mathbb{R}^n$.
The mechanism has an a priori belief $G$ about the data in $D$. 
That is, the mechanism believes that the probability of $D_i = j$ is $\Pr_G[D_i = j]$. 
Recall that the mechanism learns about $D_i$ if, after running the mechanism, the mechanism believes that $\Pr[D_i = j] \neq \Pr_{G}[D_i = j]$ for some $j$. 
We use a vector $x \in \{0, 1\}^n$ to indicate whether the mechanism learns something about each player's data. 
If the mechanism learns about $D_i$, then $x_i = 1$.
A mechanism is \emph{perfectly data private} if, when $x_i = 1$, player $i$'s expected payment from the mechanism is at least $\epsilon v_i$.
We focus on randomized mechanisms in this paper, that is, $x_i$ and payment $p_i$ are random variables.

Next, we define the utility for a player.
If $x_i = 1$, there is a cost $\epsilon v_i$ to player $i$, since something about $D_i$ is learned.
For $y \in Y^n$ representing all players' actions, the \emph{utility} for player $i$ is $u_i(y, v_i) = p_i - \epsilon x_i v_i$, where $(\hat{s}, p) = M(y, D)$. 
In this paper, we assume that players are rational, so players want to maximize their expected utilities. 
The \emph{strategy} of player $i$ is a function $q_i : \mathbb{R} \times [h] \rightarrow Y$ mapping from $v_i$ and $D_i$ to an action. 
Since players want to maximize their expected utilities, they will take the action that is not worse than any other action. 

Finally, we introduce the solution concept.
A profile of strategies $q_1, \dots, q_n$ is a \emph{Bayesian-Nash equilibrium} if for all $i$, $v_i$, and $y_i' \in Y$, $E[u_i(q(v_i, v_{-i}, D), v_i)] \geq E[u_i((y_i', q_{-i}(v_{-i}, D_{-i})), v_i)]$, where the randomness is from the mechanism and the randomness of $v_{-i}$.
A direct mechanism is \emph{Bayesian incentive compatible} (BIC) if $q_i(v_i, D_i) = v_i$ is a Bayesian-Nash equilibrium for every player $i$.

\subsection{Lower Bound}
In order to ensure that players have incentive to participate the mechanism, we wish that the mechanism is individually rational. However, we can show that for any direct, BIC, and EIIR mechanism, there is a lower bound of accuracy. Since the condition of EIIR is weaker than EPIR, the lower bound for EIIR also implies a lower bound for EPIR mechanisms.
\begin{lemma} \label{lem:lower}
If the functions $F_i$ are arbitrary functions with unbounded range, then for any $k < n/2$, no $k$-accurate, direct, BIC, EIIR, and perfectly data private mechanism exists.
\end{lemma}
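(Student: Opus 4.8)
The plan is to argue by contradiction, exploiting the freedom to fix the cost distributions adversarially. Assume a $k$-accurate, direct, BIC, EIIR, and perfectly data private mechanism $M$ exists. I would first restrict to two types and choose $F_1, F_2$ with unbounded support whose densities differ almost everywhere (for instance two exponentials with distinct rates). The purpose of this choice is that observing \emph{any} realized cost report is informative about the reporter's type: the posterior on $D_i$ differs from the prior for almost every value of $\hat v_i$. Consequently, for each player $i$ the mechanism learns about $D_i$ (sets $x_i=1$) exactly on the runs where it examines report $i$ or reads $D_i$ from the database, and the decision to do either cannot be made contingent on the \emph{content} of report $i$, since examining that report is itself what triggers the learning.

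The key structural step is to show that for each $i$ the probability $a_i(\hat v_i)=\Pr[x_i=1\mid \hat v_i]$ that $M$ learns about $D_i$ is \emph{independent} of $i$'s report, say $a_i\equiv\beta_i$. This is the crux and the step I expect to be the main obstacle, because it is where the correlation does the real work: in a standard single-parameter setting the ``allocation'' may depend on the reported cost, but here any attempt to condition the data-use decision on $\hat v_i$ already reveals the data and forces $x_i=1$ for almost all report values. I would formalize this by expressing the event ``$M$ accesses information about $D_i$'' as measurable with respect to the internal coins and the \emph{other} players' reports, so that its probability does not vary with $\hat v_i$.

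Once $\beta_i$ is constant the incentive analysis collapses. Player $i$'s expected utility from reporting $\hat v_i$ is $P_i(\hat v_i)-\epsilon\beta_i v_i$, where $P_i(\hat v_i)$ is the expected payment and the privacy-cost term $\epsilon\beta_i v_i$ no longer depends on the report. Hence BIC, which requires truthful reporting to be optimal for \emph{every} true $v_i$, forces $P_i$ to be a constant $P_i^\ast$. Now EIIR requires $P_i^\ast-\epsilon\beta_i v_i\ge 0$ for every $v_i$ in the support; perfect data privacy reinforces this, since conditioned on $x_i=1$ the expected payment is at least $\epsilon v_i$, and with non-negative payments the constant total $P_i^\ast$ is then at least $\epsilon\beta_i v_i$. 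Because the support is unbounded, no finite $P_i^\ast$ can satisfy either inequality as soon as $\beta_i>0$, so no valid mechanism meets the requirement.

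Finally I would close the loop using accuracy. If $\beta_i=0$ for every player, then $M$ never examines a report or reads the database, so $\hat s$ is independent of $D$ and cannot lie within $k$ of $n_1$ with probability at least $2/3$ for both the all-type-$1$ database ($n_1=n$) and a database with $n_1=0$: the two target intervals of radius $k$ are disjoint precisely because $n>2k$, which is where the hypothesis $k<n/2$ is used. Hence at least one player has $\beta_i>0$, and the previous paragraph yields the contradiction. The only delicate points to verify are the measurability claim underlying the constancy of $\beta_i$ and the choice of $F_1,F_2$ rendering every report informative; the remaining steps are routine bookkeeping.
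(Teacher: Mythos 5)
Your proposal is correct and follows essentially the same route as the paper's proof: pick two types whose cost distributions differ everywhere so that any accessed report shifts the posterior on $D_i$, use $k$-accuracy on the all-ones and all-zeros databases (whose target intervals are disjoint since $k < n/2$) to force some player's cost or data to be accessed with positive probability, observe that this access probability cannot depend on the player's own report, and then derive a contradiction from BIC, EIIR, and the unbounded support. The only cosmetic difference is that you derive a constant expected payment from BIC and then violate EIIR at large true costs, whereas the paper directly exhibits a profitable overbid $v_i' > E[p_i]/(\epsilon E[x_i])$ using EIIR at $v_i'$ --- the same ingredients in a slightly different order.
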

\begin{proof}
Suppose that $M$ is a BIC, EIIR, perfectly data private, and $k$-accurate mechanism. First, we show that $M$ must access at least one player's cost or data. Assume that $M$ does not access any cost or data. Thus, $M$ randomly output an estimate $\hat{s}$, which is independent of costs and data. For a database $D^1$ with all entries equal to one, since $M$ is $k$-accurate, $\Pr[\hat{s} \in [n, n-k] ] \geq \frac{2}{3}$. Similarly, if a database $D^0$ has no entries equal to one, then $\Pr[\hat{s} \in [0, k]] \geq \frac{2}{3}$. Because $k < n/2$, $[n, n-k]$ and $[0, k]$ do not overlap. But the summation of these two probabilities is greater than one, which is impossible. Hence, $M$ must access at least one player's cost or data.

Suppose that $D_i \in \{ 1, 2 \}$ and $F_1(v) \neq F_2(v)$ for all $v$. For any $\hat{v}$, if $M$ access $v_i = \hat{v}$, then the mechanism must pay player $i$, since $F_1(\hat{v}) = \Pr[v_i=\hat{v}| D_i = j] \neq \Pr[v_i=\hat{v}| D_i = j'] = F_{2}(\hat{v})$ and $M$ is perfectly data private. Let $x_i$ be the indicator random variable representing whether player $i$'s cost is accessed. Let $p_j$ be the random variable representing player $i$'s payment. Since $M$ is BIC, we suppose that players other than $i$ report truthfully. Since the mechanism decides to access $v_i$ based on $v_{-i}$, $\Pr[x_i = 1]$ is independent of $v_i$. Because $M$ must access at least one player's cost, we can find a player $i$, such that $\Pr[x_i = 1] > 0$. For a fixed $v_i$, the expected utility of $i$ is $E[p_i] - \epsilon v_i E[x_i]$. Since the range of $F$ is unbounded, we can find another $v_i' > \frac{E[p_i]}{\epsilon E[x_i]}$. Since $M$ is EIIR, we have $E[p_i'] \geq \epsilon v_i' E[x_i]$. Thus, for player $i$ with cost $v_i$, if $i$ overbids $v_i'$, the utility is $E[p_i'] - \epsilon v_i E[x_i] \geq \epsilon v_i' E[x_i] - \epsilon v_i E[x_i] > E[p_i] - \epsilon v_i E[x_i]$. Thus, player $i$ can increase expected utility by overbidding. Hence, $M$ is not BIC.
\end{proof}

Our mechanism, which is explained in the next section, is an indirect mechanism since it does not ask for players' costs. The revelation principle, which states that if there exists an indirect mechanism implementing a function in Bayesian-Nash equilibrium, then there also exists a direct BIC mechanism implementing the same function, is irrelevant under the desire for perfect data privacy. It is easy to construct a direct mechanism from our indirect mechanism. However, this direct mechanism accesses all players' data without compensating all players. Thus, this direct mechanism is not perfectly data private.

\vspace{0.5em}\noindent{\bf $\epsilon$-Differential Privacy.} The traditional definition of $\epsilon$-differential privacy compares the outcomes of the algorithm applied to adjacent databases. However, with a mechanism that offers payments, the mechanism may use both the database and the replies to the mechanism to compute an estimate and payments. Since replies depend on the individuals' costs, we compare the outcomes of the mechanism applied to two cost-data pairs $(v, D)$ and $(v', D')$. A cost vector $v = (v_1, \dots, v_n)$ is \emph{drawn according to a database $D$}, if $v_i$ is drawn from $F_j$, where $D_i = j$. Two cost-data pairs $(v, D)$ and $(v', D')$ are \emph{adjacent}, if $D$ and $D'$ differ only in the $i$-th entry and $v$ and $v'$ are independently drawn according to database $D$ and $D'$. A BIC mechanism is $\epsilon$-differentially private if, for any pair of adjacent cost-data pairs, the estimate and payments satisfy $\epsilon$-differential privacy.

\vspace{0.5em}\noindent{\bf Bayesian Assumptions.}
Our definition of $\epsilon$-differential privacy is based on the common belief $F$.
That is, the player decides his strategy assuming that other players' costs are drawn from $F$ and all players believe this assumption.
If a player allows his data to be used, then he may incur a expected cost $\epsilon v_i$.
The expected cost to the player depends on $\epsilon$ and thus also depends on the common belief $F$.
Having a common belief is a traditional assumption in the Bayesian setting.
Moreover, most BIC mechanisms become meaningless when the common belief is not true.
Thus, we assume that the common belief $F$ is correct.
\section{Mechanism} \label{sec:mechanism}
In this section, we give a perfectly data private, BIC, EIIR, $\epsilon$-differentially private, and $O(\epsilon^{-1})$-accurate mechanism. 
Every player $i$ has data $D_i \in [h]$.
To start, we assume that $F_j$ is continuous for $j \in \{1, 2\}$. 

The mechanism designs and offers contracts to players. 
The contract guarantees an expected payment to each player who accepts the contract. 
The players decide to accept or reject the contract. 
Thus, the possible actions for players are ``accept" or ``reject". 
The mechanism uses the data of players who accept the contract to estimate $n_1$. 
The estimate is \emph{unbiased} if the expected value of the estimate is $n_1$. 
To obtain an unbiased estimate, the set of players who accept the contract should be unbiased, that is, the probability of a player accepting the contract should be equal for all players. 
Moreover, since the mechanism pays players, the costs of players in the accepting set should be bounded. 

The mechanism first finds $\alpha_j$ for $j \in [h]$, such that $F_j(\alpha_j) = c$, where $c$ will be determined later. 
Then, each player $i$ is given a contract : ``If $D_i = j$, your expected payment will be $\epsilon \alpha_j$." 
A player $i$ with $D_i = j$ decides truthfully if, when $v_i \leq \alpha_j$, player $i$ accepts the contract and rejects otherwise.
Let $W$ be the set of players who accept the contract. 
If all players decide truthfully, the cost to each player in $W$ is bounded by $\max_j \alpha_j$. 
Since for player $i$ with $D_i = j$, $\Pr[v_i \leq \alpha_j] = c$, every player accepts the contract with probability $c$. 
Thus, $W$ is an unbiased and cost-bounded sample set.

Since the probability that a player accepts the contract is $c$, the value $m := |\{i \in W : D_i = 1\}|$ is a random variable $\textrm{bin}(n_1, c)$ from a binomial distribution\footnote{A \emph{binomial distribution} with parameter $n$ and $p$ is denoted by $\textrm{Bin}(n, p)$. 
The probability density function of $\textrm{Bin}(n, p)$ is $f(k; n, p) = {n \choose k} p^k (1-p)^{n-k}$. Let $\textrm{bin}(n, p)$ denote a random variable drawn from $\textrm{Bin}(n, p)$. 
The expected value of $\textrm{bin}(n, p)$ is $np$ and variance is $np(1-p)$.} $\textrm{Bin}(n_1, c)$. 
Since the expected value of $m$ is $cn_1$, $\frac{m}{c}$ is an unbiased estimate of $n_1$. We say $\frac{m}{c}$ is a \emph{na\"{\i}ve} estimate of $n_1$.

We explain how to produce an estimate that satisfies $\epsilon$-differential privacy.
Although the naive estimate is an unbiased estimate of $n_1$, it does not satisfy differential privacy.
Consider an adjacent pairs of cost-data pairs $(v, D)$ and $(v', D')$, where $D$ and $D'$ differ in the $i$-th entry.
Let $n_1$ be the number of player $i$ with $D_i = 1$ and $n_1'$ be the number of players $i$ with $D_i' = 1$.
The naive estimate does not satisfy differential privacy, since if $D_i = 1$ and $v_i \leq \epsilon \alpha_1$, then an outsider can infer $D_i$ easily by comparing the naive estimates of $n_1$ and $n_1'$.
Thus, we should introduce a random noise to the naive estimate to satisfy differential privacy.

The mechanism uses the Laplacian distribution as a source of the random noise. 
The Laplacian noise is commonly used to obtain differential privacy.
A \emph{Laplacian distribution} with mean 0 and parameter $b > 0$ is denoted by $\textrm{Lap}(b)$. The probability density function of $\textrm{Lap}(b)$ is 
\[f(x) = \frac{1}{2b} \exp\Paren{-\frac{|x|}{b}}.\]
Let $\textrm{lap}(b)$ denote a random variable drawn from $\textrm{Lap}(b)$.

In order to make estimate satisfy differential privacy, the mechanism adds random noise $\textrm{lap}(\frac{1}{\epsilon})$ to the naive estimate. 
Since the mean of the Laplacian noise is zero, $s = \frac{1}{c}(m + \textrm{lap}(\frac{1}{\epsilon}))$ is an unbiased estimate of $n_1$. 
However, $s$ might be larger than $n$ or be negative, both of which are meaningless. 
We \emph{truncate} $s$ to get $\hat{s}$, that is when $s > n$, the mechanism outputs $n$ and when $s < 0$, the mechanism outputs $0$.

We also use the Laplacian noises to produce payments that satisfy $\epsilon$-differential privacy. 
By the construction of the contract, for any player $i$ with $D_i = j$ who accepts the contract, the mechanism pays player $i$ for $\epsilon \alpha_j$ in expectation.
If the mechanism pays player $i$ for $\epsilon \alpha_j$ deterministically, then an outsider can infer player $i$'s data easily.
Thus, we should introduce noise to the payments.
We add noise $\epsilon~\textrm{lap}(\frac{\gamma}{\epsilon})$ to the payment, where $\gamma := |\max_j \alpha_j - \min_j \alpha_j|$. 
Thus, $p_i = \epsilon(\alpha_j + \textrm{lap}(\frac{\gamma}{\epsilon}))$. 
Since the expected value of $\textrm{lap}(\frac{\gamma}{\epsilon})$ is zero, the expected payment of player $i$ is $\epsilon \alpha_j$, which satisfies the guarantee in the contract. 
Moreover, since $\epsilon \alpha_j$ is larger than $\epsilon v_i$, the mechanism is EIIR. The formal description of the mechanism is in Mechanism \ref{mechanism}.

\begin{algorithm}[t]
    \AlCapSty{\AlTitleFnt{Mechanism 1: $\epsilon$-differentially private mechanism}}\\
    \label{mechanism}
    \DontPrintSemicolon
    \LinesNumbered
    \SetKwInOut{Input}{input}
    \SetKwInOut{Output}{output}
    \Input{privacy parameter $\epsilon$; cost distributions $F_j$, $j \in [h]$}
    \Output{estimate $\hat{s}$; payment $p$}
Pick a real number $c \in (0, 1)$\\
Find $\alpha_j$ for all $j \in [h]$, such that $F_j(\alpha_j) = c$.\\
For each player $i$, offer a contract:  \\
\Indp If $D_i = j$, the expected payment will be $\epsilon \alpha_j$. \\
\Indm Let $W = \{ i : \text{$i$ accepts contract} \}$. \\
Let $m = |\{ i \in W : D_i = 1 \}|$. \\
Let $s = \frac{1}{c}(m + \textrm{lap}(\frac{1}{\epsilon}))$.\\
$\hat{s} = s \text{ if s $\in [0, n]$}, \text{ }   0  \text{ if $s < 0$}, \text{ } n \text{ if $s > n$}$ \\
$p_i = \begin{cases}
       0    & \text{if $i \notin W$}\\
       \epsilon (\alpha_j + \textrm{lap}(\frac{\gamma}{\epsilon}))\text{, where $\gamma :=  |\max_j \alpha_j - \min_j \alpha_j|$} & \text{if $i \in W$ and $D_i = j$} \\
       \end{cases}$ \\
\Return{$(\hat{s}, p)$}
\end{algorithm}
\begin{lemma}\label{lem:safe}
Mechanism 1 is perfectly data private.
\end{lemma}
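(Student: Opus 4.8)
The plan is to show that the event $x_i = 1$ — the mechanism revising its belief about $D_i$ — coincides exactly with player $i$ accepting the contract, and that whenever this happens the expected payment is at least $\epsilon v_i$. Since the only signal player $i$ sends to the mechanism is the binary accept/reject decision, I would partition the analysis on these two actions and handle each separately.

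First I would argue that a rejecting player leaks nothing about his data, so $x_i = 0$. Under truthful deciding, a player with $D_i = j$ rejects precisely when $v_i > \alpha_j$, which by the construction $F_j(\alpha_j) = c$ occurs with probability $1 - c$ — the \emph{same} value for every type $j$. Hence the rejection event is equally likely under each hypothesis $D_i = j$, and a Bayesian update leaves the mechanism's belief unchanged, $\Pr[D_i = j \mid \textrm{reject}] = \Pr_G[D_i = j]$. Moreover a rejecting player is not in $W$, so his data is never read and contributes neither to $m$ nor to any type-dependent payment; thus no other channel carries information about $D_i$. Since $x_i = 0$ in this case, perfect data privacy imposes no payment obligation on player $i$.

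Second, for an accepting player the mechanism reads $D_i$ directly (to compute $m$ and the type-dependent payment), so its posterior becomes a point mass and $x_i = 1$. I then bound the conditional expected payment: conditioned on acceptance with $D_i = j$, we have $p_i = \epsilon(\alpha_j + \textrm{lap}(\tfrac{\gamma}{\epsilon}))$, and since the Laplacian noise has mean zero, $E[p_i] = \epsilon \alpha_j$. Truthful deciding guarantees that an accepting player satisfies $v_i \le \alpha_j$, so $E[p_i] = \epsilon\alpha_j \ge \epsilon v_i$, which is exactly the required condition. Combining the two cases yields the lemma.

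The main obstacle is the first step: carefully justifying that rejection is uninformative. The design feature being exploited is that the thresholds $\alpha_j$ are chosen so that $F_j(\alpha_j) = c$ is a \emph{common} value across all types, which makes the accept/reject action statistically independent of $D_i$. I would need to confirm that this independence, together with the fact that a rejecting player's data is never touched, suffices to conclude posterior equals prior — in particular that the estimate $\hat{s}$ and the other players' payments, which depend only on the data of players in $W$, reveal nothing further about a rejecting player's type.
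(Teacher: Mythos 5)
Your proposal is correct and follows essentially the same argument as the paper: the acceptance probability is $c$ for every type $j$ (by the choice $F_j(\alpha_j) = c$), so the accept/reject reply is statistically independent of $D_i$ and reveals nothing, while the data of accepting players is read but compensated with expected payment $\epsilon \alpha_j \geq \epsilon v_i$ under truthful deciding. Your version spells out the Bayesian-update step and the ``no other channel'' check a bit more explicitly than the paper does, but the decomposition and the key facts used are identical.
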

\begin{proof}
Let $y_i$ be player $i$'s reply to the contract. By construction of the contract, if $i$ decides truthfully, then $\Pr[y_i = ``accept" \mid D_i = j] = c$ for all $j \in [h]$. That is, the probability of accepting the contract and $D_i$ are independent. Thus, for any $i$, the mechanism cannot learn about $D_i$ by reading $y_i$. Moreover, the mechanism only reads $D_i$, where $i \in W$. Since player $i \in W$ with $D_i = j$ is paid $\epsilon \alpha_j$ in expectation and $v_i \leq \alpha_j$, the mechanism satisfies the requirement.
\end{proof}
\begin{lemma} \label{lem:truthful}
Mechanism 1 is BIC and EIIR. 
\end{lemma}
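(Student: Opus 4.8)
The plan is to exploit a decoupling property of Mechanism 1: the payment $p_i$ and the privacy-cost indicator $x_i$ for player $i$ depend only on $i$'s own accept/reject decision and on $D_i$, together with independent Laplacian noise, and not at all on the decisions or costs of the other players. Consequently each player faces an isolated single-player decision problem, so it suffices to show that deciding truthfully maximizes player $i$'s expected utility pointwise in $v_i$; this in fact yields dominant-strategy truthfulness, which immediately implies the weaker BIC property. The only probabilistic fact I would invoke is $E[\textrm{lap}(\frac{\gamma}{\epsilon})] = 0$.

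First I would pin down the indicator $x_i$. By Lemma \ref{lem:safe}, a player's reply is independent of $D_i$, so rejecting leaks nothing and contributes $x_i = 0$, while the mechanism reads $D_i$ precisely for $i \in W$; hence $x_i = 1$ exactly when $i$ accepts. Then I would compute the two expected utilities for a player with $D_i = j$ and cost $v_i$. If $i$ rejects, then $p_i = 0$ and $x_i = 0$, so $E[u_i] = 0$. If $i$ accepts, then $p_i = \epsilon(\alpha_j + \textrm{lap}(\frac{\gamma}{\epsilon}))$ and $x_i = 1$, so using $E[\textrm{lap}(\frac{\gamma}{\epsilon})] = 0$ gives $E[u_i] = \epsilon\alpha_j - \epsilon v_i = \epsilon(\alpha_j - v_i)$.

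Comparing the two, accepting is weakly better exactly when $\epsilon(\alpha_j - v_i) \ge 0$, i.e.\ when $v_i \le \alpha_j$, and rejecting is strictly better when $v_i > \alpha_j$. This is precisely the truthful decision rule, so deciding truthfully maximizes each player's expected utility regardless of the others' strategies; hence the profile in which everyone decides truthfully is a Bayesian-Nash equilibrium and the mechanism is BIC. For EIIR I would observe that under truthful play the expected utility equals $\max\{\epsilon(\alpha_j - v_i),\, 0\}$, which is nonnegative, so the mechanism is ex-interim individually rational.

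I expect the only genuinely delicate point to be the bookkeeping of when the privacy cost $\epsilon x_i v_i$ is actually incurred --- confirming via Lemma \ref{lem:safe} that $x_i$ coincides with the acceptance event --- after which the argument is elementary. It is worth flagging that the mechanism is \emph{not} ex-post individually rational: for an accepting player the realized utility $\epsilon(\alpha_j + \textrm{lap}(\frac{\gamma}{\epsilon})) - \epsilon v_i$ can be negative when the Laplacian noise is sufficiently negative, which is exactly why only the interim guarantee EIIR is claimed rather than EPIR.
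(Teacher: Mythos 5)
Your proof is correct and follows essentially the same route as the paper's: compare the expected utility of rejecting (zero) with that of accepting ($\epsilon(\alpha_j - v_i)$, using the zero-mean Laplacian noise), conclude that the truthful rule picks the better option, and then note that under truthful play the expected utility is non-negative, giving EIIR. Your additional observations --- that the argument actually yields dominant-strategy truthfulness since utilities are decoupled across players, that the identification $x_i = 1 \iff i \in W$ rests on Lemma \ref{lem:safe}, and that EPIR fails because of the Laplacian noise --- are accurate refinements of the same argument rather than a different approach.
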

\begin{proof}
(BIC) The payments for players who is not in $W$ are always 0. For player $i$, there are two cases.\\
\textbf{Case 1:} $D_i = j$ and $v_i \leq \alpha_j$. Accepting the contract will get expected payment $\epsilon(\alpha_j - v_i) \geq 0$.\\
\textbf{Case 2:} $D_i = j$ and $v_i > \alpha_j$. Accepting the contract will get expected payment $\epsilon(\alpha_j - v_i) < 0$.

(EIIR) Suppose that every player decides truthfully. Then only players with $v_i \leq \alpha_j$ and $D_i = j$ for some $j$ are in $W$. Since the expected payment for $i$ with $D_i = j$ is $\epsilon \alpha_j$, the expected utility of the player is non-negative.
\end{proof}

Two random variables $x_1$ and $x_2$ are $\epsilon$-\emph{mutually bounded}, if $\forall I \subseteq \mathbb{R}$, $\Pr[x_1 \in I] \leq e^{\epsilon} \Pr[x_2 \in I]$ and $\Pr[x_2 \in I] \leq e^{\epsilon} \Pr[x_1 \in I]$. 

\begin{lemma}[Fact 2 in~\cite{Roth11}]\label{prop:dp}
If $x_1$ and $x_2$ are $\epsilon$-mutually bounded and $f$ is a function, then $f(x_1)$ and $f(x_2)$ are also $\epsilon$-mutually bounded. \qed
\end{lemma}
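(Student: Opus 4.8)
The plan is to reduce the claim about $f(x_1)$ and $f(x_2)$ back to the hypothesis on $x_1$ and $x_2$ by pulling sets back through $f$. The core observation is the preimage identity: for any set $J \subseteq \mathbb{R}$, the event $\{f(x) \in J\}$ coincides with the event $\{x \in f^{-1}(J)\}$, where $f^{-1}(J) := \{z \in \mathbb{R} : f(z) \in J\}$. Consequently $\Pr[f(x_1) \in J] = \Pr[x_1 \in f^{-1}(J)]$, and similarly for $x_2$. This lets me express any probability statement about the post-processed variables as a probability statement about the original variables applied to a specific set.

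Given this identity, the argument is a direct substitution. I would fix an arbitrary $J \subseteq \mathbb{R}$ and set $I := f^{-1}(J)$. Applying the first inequality in the definition of $\epsilon$-mutual boundedness to this particular choice of $I$ gives
\[
\Pr[f(x_1) \in J] = \Pr[x_1 \in I] \leq e^{\epsilon}\, \Pr[x_2 \in I] = e^{\epsilon}\, \Pr[f(x_2) \in J].
\]
Symmetrically, applying the second inequality with the same $I$ yields $\Pr[f(x_2) \in J] \leq e^{\epsilon}\, \Pr[f(x_1) \in J]$. Since $J$ was arbitrary, both inequalities demanded by the definition hold for the pair $f(x_1), f(x_2)$, which is exactly the conclusion.

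There is essentially no substantive obstacle here; the only point that deserves a remark is measurability. For $\Pr[x_i \in f^{-1}(J)]$ to be well-defined I would assume $f$ is measurable, so that $f^{-1}(J)$ is an admissible set whenever $J$ is. Because the hypothesis quantifies over all admissible subsets $I$, the preimage $f^{-1}(J)$ is a legitimate instance of $I$, and no further structure on $f$ (continuity, monotonicity, invertibility) is required. This generality is precisely what makes the lemma useful downstream: it guarantees that any deterministic post-processing of an $\epsilon$-mutually bounded quantity — such as truncating the noisy estimate $s$ to $\hat{s}$ or otherwise transforming a payment — inherits the same $\epsilon$ bound without additional analysis.
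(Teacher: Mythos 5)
Your proof is correct: the preimage identity $\Pr[f(x_i) \in J] = \Pr[x_i \in f^{-1}(J)]$ together with instantiating the hypothesis at $I = f^{-1}(J)$ is exactly the standard post-processing argument, and your measurability remark is the only caveat worth making. Note that the paper does not prove this lemma at all --- it imports it as Fact 2 from~\cite{Roth11} --- so your argument fills in precisely the proof that the cited reference supplies, with no deviation in approach.
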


\begin{lemma}[\cite{Dwork06}]\label{prop:dp2}
Let $x_1$ and $x_2$ be two random variables. If $|x_1 - x_2| \leq k$, then $x_1 + \textrm{lap}(\frac{k}{\epsilon})$ and $x_2 + \textrm{lap}(\frac{k}{\epsilon})$ are $\epsilon$-mutually bounded. \qed
\end{lemma}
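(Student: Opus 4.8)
The plan is to reduce the claim to a pointwise comparison of the densities of the two noised quantities and then integrate. First I would treat the base case in which $x_1$ and $x_2$ are fixed reals with $|x_1 - x_2| \leq k$. Writing $b = k/\epsilon$, the variable $x_1 + \textrm{lap}(b)$ has density $f_1(t) = \frac{1}{2b}\exp(-|t - x_1|/b)$ and $x_2 + \textrm{lap}(b)$ has density $f_2(t) = \frac{1}{2b}\exp(-|t - x_2|/b)$. The key computation is the density ratio
\[
\frac{f_1(t)}{f_2(t)} = \exp\Paren{\frac{|t-x_2| - |t-x_1|}{b}},
\]
whose exponent I would control with the reverse triangle inequality: $\big||t-x_2| - |t-x_1|\big| \leq |x_1 - x_2| \leq k$, so the exponent lies in $[-k/b, k/b] = [-\epsilon, \epsilon]$. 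Hence $e^{-\epsilon} \leq f_1(t)/f_2(t) \leq e^{\epsilon}$ for every $t$.

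Next I would lift this pointwise bound to arbitrary sets. For any $I \subseteq \mathbb{R}$, integrating $f_1 \leq e^{\epsilon} f_2$ over $I$ gives $\Pr[x_1 + \textrm{lap}(b) \in I] \leq e^{\epsilon} \Pr[x_2 + \textrm{lap}(b) \in I]$, and the symmetric inequality follows from $f_2 \leq e^{\epsilon} f_1$. This is exactly $\epsilon$-mutual boundedness in the constant case.

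Finally, to handle the stated case where $x_1, x_2$ are genuinely random with $|x_1 - x_2| \leq k$ holding almost surely and the Laplacian noise drawn independently of them, I would condition on the realizations. Define $g(a) = \Pr[a + \textrm{lap}(b) \in I] = \int_I f_a$; the constant-case argument shows $g(a) \leq e^{\epsilon} g(a')$ whenever $|a - a'| \leq k$. Since the difference bound holds under the coupling of $(x_1, x_2)$, taking expectations gives
\[
\Pr[x_1 + \textrm{lap}(b) \in I] = E[g(x_1)] \leq e^{\epsilon} E[g(x_2)] = e^{\epsilon} \Pr[x_2 + \textrm{lap}(b) \in I],
\]
and symmetrically. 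The only delicate point — the main obstacle — is this last step: one must establish the multiplicative bound on $g$ pointwise, so that it survives the expectation over the (possibly correlated) pair $(x_1, x_2)$ rather than only for fixed arguments. The density-ratio computation itself is routine, and since the result is the standard Laplace-mechanism property of \cite{Dwork06}, I would expect to present it compactly.
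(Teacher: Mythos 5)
Your proof is correct. Note that the paper does not prove this lemma at all---it is stated as a known fact cited from \cite{Dwork06} (the standard Laplace-mechanism property), so there is no in-paper argument to compare against. Your write-up is the standard proof: the density-ratio bound $e^{-\epsilon} \leq f_1(t)/f_2(t) \leq e^{\epsilon}$ via the reverse triangle inequality, integration over $I$, and then the conditioning step $E[g(x_1)] \leq e^{\epsilon}E[g(x_2)]$, which correctly handles the fact that the lemma is stated for random (possibly correlated) $x_1, x_2$ rather than constants---this last step is indeed the only point requiring care, and it matches how the paper actually invokes the lemma (with $m_w, m_w'$ conditioned on $v_{-i} = w$, and with the noise drawn independently of the shifts).
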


The next two lemmas address the $\epsilon$-differential privacy of the payment and the estimate. Let $(v, D)$ and $(v', D')$ be adjacent cost-data pairs. Let $(\hat{s}, p)$ and $(\hat{s}', p')$ be the results for $(v, D)$ and $(v', D')$ respectively.

\begin{lemma} \label{lem:is-privacy}
For any $I \subseteq \mathbb{R}$, $\Pr[ \hat{s} \in I ] \leq e^{\epsilon}\Pr[\hat{s}' \in I]$.
\end{lemma}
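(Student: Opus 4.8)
The plan is to use the post-processing lemma to reduce the claim to a statement about the noisy count. Observe that $\hat{s}$ is obtained from $m + \textrm{lap}(\frac{1}{\epsilon})$ by scaling by $\frac{1}{c}$ and then truncating to $[0,n]$, and these two operations compose into a single deterministic function $h$ with $\hat{s} = h(m + \textrm{lap}(\frac{1}{\epsilon}))$. Hence, by Lemma \ref{prop:dp}, it suffices to show that $m + \textrm{lap}(\frac{1}{\epsilon})$ and $m' + \textrm{lap}(\frac{1}{\epsilon})$ are $\epsilon$-mutually bounded, which yields the one-sided inequality asked for (and, by symmetry, its reverse). The one genuine wrinkle is that $m$ is \emph{not} a deterministic function of the database: whether a player lands in $W$ depends on his random cost. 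So the usual Laplace-mechanism sensitivity argument --- ``$m(D)$ and $m(D')$ differ by at most one, therefore add $\textrm{lap}(\frac{1}{\epsilon})$'' --- cannot be invoked verbatim, and I expect this to be the main obstacle.

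To handle it, I would decompose $m = m_{-i} + b_i$, where $m_{-i} = |\{ j \in W : j \neq i,\ D_j = 1\}|$ and $b_i \in \{0,1\}$ indicates whether player $i$ accepts and has $D_i = 1$; write $m' = m'_{-i} + b'_i$ analogously. Since $D_j = D'_j$ for all $j \neq i$ and the costs are drawn independently from the same $F_{D_j}$, the variables $m_{-i}$ and $m'_{-i}$ are identically distributed; call this common law $\rho$. Moreover $b_i$ depends only on $v_i$ and is independent of $m_{-i}$. Therefore both $\Pr[\hat{s} \in I]$ and $\Pr[\hat{s}' \in I]$ can be written as $\sum_{\ell} \rho(\ell)(\cdots)$ against the same mixing weights, and it suffices to compare the two integrands term by term, i.e. conditioned on $m_{-i} = \ell$.

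For a fixed $\ell$, the conditional law of $\hat{s}$ is the $b_i$-mixture of $h(\ell + \textrm{lap}(\frac{1}{\epsilon}))$ and $h((\ell+1) + \textrm{lap}(\frac{1}{\epsilon}))$, and likewise for $\hat{s}'$ with weight $b'_i$. Since $|\ell - (\ell+1)| = 1$, Lemma \ref{prop:dp2} shows $\ell + \textrm{lap}(\frac{1}{\epsilon})$ and $(\ell+1) + \textrm{lap}(\frac{1}{\epsilon})$ are $\epsilon$-mutually bounded, and Lemma \ref{prop:dp} preserves this after applying $h$. Writing $a = \Pr[h(\ell + \textrm{lap}(\frac{1}{\epsilon})) \in I]$ and $b = \Pr[h((\ell+1) + \textrm{lap}(\frac{1}{\epsilon})) \in I]$, mutual boundedness gives $a \leq e^{\epsilon} b$ and $b \leq e^{\epsilon} a$, hence $\max(a,b) \leq e^{\epsilon}\min(a,b)$. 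The conditional probabilities for $\hat{s}$ and $\hat{s}'$ are convex combinations of $a$ and $b$, so each lies in $[\min(a,b), \max(a,b)]$; therefore any such convex combination is at most $e^{\epsilon}$ times any other. This gives the termwise bound, and summing against $\rho$ completes the proof. The only nonroutine steps are the identically-distributed decomposition of the background count and this convex-combination inequality; everything else is bookkeeping.
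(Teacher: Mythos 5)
Your proof is correct and follows essentially the same route as the paper's: reduce via the post-processing lemma (Lemma \ref{prop:dp}) to the Laplace-noised counts, condition on the other players' randomness so that the two adjacent scenarios share a common (identically distributed) background count, and invoke Lemma \ref{prop:dp2} with sensitivity one. The only cosmetic difference is that you condition on the count $m_{-i} = \ell$ and handle player $i$'s own acceptance randomness explicitly via the convex-combination step, whereas the paper conditions on the full cost vector $v_{-i} = w$ and absorbs that residual randomness directly into its application of Lemma \ref{prop:dp2} to the (random) pair $m_w, m_w'$.
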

\begin{proof}
Without loss of generality, we assume that $1 = D_i$ and $D_i' \neq 1$. First, $\Pr[\hat{s} \in I] = \int_{v_{-i} \in \mathbb{R}^{n-1}} \Pr[\hat{s} \in I \mid v_{-i}] \Pr[v_{-i}] dv_{-i}$. Similarly, $\Pr[\hat{s}' \in I] = \int_{v_{-i} \in \mathbb{R}^{n-1}} \Pr[\hat{s}' \in I \mid v_{-i}] \Pr[v_{-i}] dv_{-i}$. Let $\hat{q}_w$ and $\hat{q}'_w$ be two random variables, which are equal to $\hat{s}$ and $\hat{s'}$ when $v_{-i} = w$. If $\hat{q}_w$ and $\hat{q}'_w$ are $\epsilon$-mutually bounded for all $w$, then $\hat{s}$ and $\hat{s}'$ are $\epsilon$-mutually bounded, since then
{
\allowdisplaybreaks
\begin{align*}
\Pr[\hat{s} \in I] &= \int_{w \in \mathbb{R}^{n-1}} \Pr[\hat{s} \in I \mid v_{-i} = w] \Pr[v_{-i} = w] dw \\
& = \int_{w \in \mathbb{R}^{n-1}} \Pr[\hat{q}_w \in I] \Pr[v_{-i} = w] dw \\
& \leq \int_{w \in \mathbb{R}^{n-1}} e^{\epsilon} \Pr[\hat{q}'_w \in I] \Pr[v_{-i} = w] dw \\
& = \int_{w \in \mathbb{R}^{n-1}} e^{\epsilon}\Pr[\hat{s}' \in I \mid v_{-i} = w] \Pr[v_{-i} = w] dw ~=~ e^{\epsilon} \Pr[\hat{s}' \in I].
\end{align*}
}

The case $\Pr[\hat{s}' \in I] \leq e^{\epsilon} \Pr[\hat{s} \in I]$ can be shown by a symmetric argument.

Here, we show that $\hat{q}_w$ and $\hat{q}_w'$ are $\epsilon$-mutually bounded for all $w$. Fix $v_{-i} = w$. Let $W_w$ and  $W_w'$ be the sets of players accepting the contract when applying the algorithm to inputs $(v, D)$ and $(v', D')$ respectively. Let $m_w := |\{ i : D_i = 1, i \in W_w \}|$ and $m'_w := |\{ i : D_i' = 1, i \in W_w' \}|$. When applying the mechanism to inputs $(v, D)$ and $(v', D')$, the mechanism computes $s_w = \frac{1}{c}(m_w + \textrm{lap}(\frac{1}{\epsilon}))$ and $s_w' = \frac{1}{c}(m_w' + \textrm{lap}(\frac{1}{\epsilon}))$ respectively. Then, the mechanism truncates $s_w$ and $s_w'$ to get $\hat{s}_w$ and $\hat{s}_w'$. By Lemma \ref{prop:dp}, since multiplication and truncation are functions, it suffices to show that $m_w + \textrm{lap}(\frac{1}{\epsilon})$ and $m_w' + \textrm{lap}(\frac{1}{\epsilon})$ are $\epsilon$-mutually bounded when $v_{-i} = w$. Since $W \setminus W'$ is either the empty set or $\{i\}$, the difference between $m_w$ and $m_w'$ is at most one. Thus, Lemma \ref{prop:dp2} implies that $m_w + \textrm{lap}(\frac{1}{\epsilon})$ and $m_w' + \textrm{lap}(\frac{1}{\epsilon})$ are $\epsilon$-mutually bounded. Thus, $\hat{q}_w$ and $\hat{q}_w'$ are $\epsilon$-mutually bounded for all $w$, and hence $\hat{s}$ and $\hat{s}'$ are mutually bounded.
\end{proof}

\begin{lemma} \label{lem:ip-privacy}
For all $i \in [n]$ and for all $I \subseteq \mathbb{R}$, $\Pr[ p_i \in I ] \leq e^{\epsilon}\Pr[p_i' \in I]$. 
\end{lemma}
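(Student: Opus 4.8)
The plan is to exploit the fact that, unlike the estimate $\hat{s}$, the payment $p_i$ to player $i$ depends only on player $i$'s own action and on fresh Laplacian noise, and not at all on the other players' costs $v_{-i}$. Consequently there is no need to condition on and integrate over $v_{-i}$ as in Lemma~\ref{lem:is-privacy}; I can work directly with the one-dimensional law of $p_i$. Let $\ell$ be the single coordinate in which the adjacent pairs $(v,D)$ and $(v',D')$ differ. I would split into two cases according to whether $\ell = i$.

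First, when $i \neq \ell$ we have $D_i = D_i'$, so player $i$ is offered the same contract value $\alpha_{D_i}$ in both instances, and $v_i, v_i'$ are drawn from the same distribution $F_{D_i}$. Since $p_i$ is a function only of the accept/reject decision determined by $v_i$, the common value $\alpha_{D_i}$, and independent noise, the random variables $p_i$ and $p_i'$ are identically distributed. Hence $\Pr[p_i \in I] = \Pr[p_i' \in I]$ and the desired inequality holds with ratio $1 \leq e^{\epsilon}$.

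The substantive case is $i = \ell$, say $D_i = j$ and $D_i' = j'$ with $j \neq j'$. Here I would describe $p_i$ as a two-point mixture: with probability $1 - c$ player $i$ rejects and $p_i = 0$, and with probability $c$ player $i$ accepts and $p_i = \epsilon(\alpha_j + \textrm{lap}(\frac{\gamma}{\epsilon}))$; likewise for $p_i'$ with $\alpha_{j'}$ in place of $\alpha_j$. The key structural point is that the rejection mass is exactly $1 - c$ in both instances, because the contract is built so that $\Pr[\text{accept}\mid D_i = t] = F_t(\alpha_t) = c$ for every type $t$; thus the atom at $0$ carries identical weight for $p_i$ and $p_i'$. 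For the accept branch, since $|\alpha_j - \alpha_{j'}| \leq \max_t \alpha_t - \min_t \alpha_t = \gamma$, Lemma~\ref{prop:dp2} (with $k = \gamma$) shows that $\alpha_j + \textrm{lap}(\frac{\gamma}{\epsilon})$ and $\alpha_{j'} + \textrm{lap}(\frac{\gamma}{\epsilon})$ are $\epsilon$-mutually bounded, and applying Lemma~\ref{prop:dp} with the function $x \mapsto \epsilon x$ transfers this to the conditional payments $\epsilon(\alpha_j + \textrm{lap}(\frac{\gamma}{\epsilon}))$ and $\epsilon(\alpha_{j'} + \textrm{lap}(\frac{\gamma}{\epsilon}))$.

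To finish I would combine the two branches by splitting on whether $0 \in I$. Writing $P := \epsilon(\alpha_j + \textrm{lap}(\frac{\gamma}{\epsilon}))$ and $P' := \epsilon(\alpha_{j'} + \textrm{lap}(\frac{\gamma}{\epsilon}))$, if $0 \notin I$ then $\Pr[p_i \in I] = c\Pr[P \in I] \leq c e^{\epsilon}\Pr[P' \in I] = e^{\epsilon}\Pr[p_i' \in I]$, and if $0 \in I$ then $\Pr[p_i \in I] = (1-c) + c\Pr[P \in I] \leq e^{\epsilon}(1-c) + e^{\epsilon} c\Pr[P' \in I] = e^{\epsilon}\Pr[p_i' \in I]$, using $\Pr[P \in I] \leq e^{\epsilon}\Pr[P' \in I]$ and $1 \leq e^{\epsilon}$ on the atom. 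The reverse inequality follows by symmetry. I expect the main obstacle to be precisely the handling of the atom at $0$: it is only because the rejection probability $1 - c$ is type-independent (a feature of the contract design) that the point mass does not blow up the privacy ratio; had different types accepted with different probabilities, the atom could violate the $e^{\epsilon}$ bound.
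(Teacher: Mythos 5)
Your proposal is correct and takes essentially the same route as the paper's proof: decompose $p_i$ as a mixture of an atom at $0$ with type-independent mass $1-c$ and the accept-branch payment, then reduce to showing $\epsilon(\alpha_j + \textrm{lap}(\frac{\gamma}{\epsilon}))$ and $\epsilon(\alpha_{j'} + \textrm{lap}(\frac{\gamma}{\epsilon}))$ are $\epsilon$-mutually bounded via Lemma~\ref{prop:dp2} (using $|\alpha_j - \alpha_{j'}| \leq \gamma$) and Lemma~\ref{prop:dp} (multiplication by $\epsilon$). The only difference is one of explicitness: you spell out the case $0 \in I$ versus $0 \notin I$ and the unaffected-player case, steps the paper handles implicitly.
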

\begin{proof}
Without loss of generality, we assume that $D_i = 1$ and $D_i' \neq 1$. For player $j \neq i$, if $j \notin W$, the payment is zero. If $j \in W$, the payment to $j$ depends only on the data $D_j$ and does not depend on the set of players receiving payments. Thus, $p_j$ does not change and we only need to consider $p_i$. Note that $p_i \neq 0$ only happens if player $i$ is in $W$. If $p_i \neq 0$, then $p_i$ is a random variable $P^1 = \epsilon(\alpha_1 + \text{\textrm{lap}($\frac{\gamma}{\epsilon}$)})$. Thus, for any $I \subseteq \mathbb{R} \setminus \{0\}$, the probability $\Pr[p_i \in I] = c\Pr[P^1 \in I]$, where $c$ is the probability of that a player accepts the contract. The probability $\Pr[p_i = 0] = (1-c) + c\Pr[P^1 = 0]$. Suppose that $D_i' = j'$. Symmetrically, let $P^2 = \epsilon(\alpha_{j'} + \text{\textrm{lap}($\frac{\gamma}{\epsilon}$)})$, for any $I \subseteq \mathbb{R} \setminus \{0\}$, the probability $\Pr[p_i' \in I] = c\Pr[ P^2 \in I]$ and $\Pr[p_i' = 0] = (1-c) + c\Pr[P^2 = 0]$. 

Thus, it suffices to show that $P^1$ and $P^2$ are $\epsilon$-mutually bounded. By Lemma \ref{prop:dp}, since multiplication is a function, it is sufficient to show that $\alpha_1 + \textrm{lap}(\frac{\gamma}{\epsilon})$ and $\alpha_{j'} + \textrm{lap}(\frac{\gamma}{\epsilon})$ are $\epsilon$-mutually bounded. By Lemma \ref{prop:dp2}, since $|\alpha_1 - \alpha_{j'}| \leq \gamma$, $\alpha_1 + \textrm{lap}(\frac{\gamma}{\epsilon})$ and $\alpha_{j'} + \textrm{lap}(\frac{\gamma}{\epsilon})$ are $\epsilon$-mutually bounded.
\end{proof}

\begin{lemma} \label{lem:d-accuracy}
Mechanism 1 is $\sqrt{3(\frac{n_1(1-c)}{c}+\frac{2}{\epsilon^2c^2})}$-accurate. 
\end{lemma}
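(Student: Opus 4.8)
The plan is to control the deviation of the untruncated estimate $s$ from $n_1$ by Chebyshev's inequality, and then argue that truncating $s$ to $[0,n]$ to obtain $\hat{s}$ can only decrease the error. First I would observe that since $n_1 = |\{i : D_i = 1\}|$ satisfies $0 \le n_1 \le n$, projecting $s$ onto the interval $[0,n]$ moves it no farther from $n_1$: in every outcome $|\hat{s} - n_1| \le |s - n_1|$, because projecting a point onto a convex interval containing $n_1$ is a contraction toward $n_1$. Hence $\Pr[|\hat{s} - n_1| \ge k] \le \Pr[|s - n_1| \ge k]$, and it suffices to bound the latter.

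Next I would compute the mean and variance of $s = \frac{1}{c}(m + \textrm{lap}(\frac{1}{\epsilon}))$. Recall $m \sim \textrm{Bin}(n_1, c)$ (each of the $n_1$ players with data $1$ independently accepts with probability $c$), and the Laplacian noise is drawn independently of $m$. Since $E[m] = c\,n_1$ and $E[\textrm{lap}(\frac{1}{\epsilon})] = 0$, we get $E[s] = n_1$, so $s$ is unbiased. Using independence together with $\textrm{Var}(m) = n_1 c(1-c)$ and $\textrm{Var}(\textrm{lap}(\frac{1}{\epsilon})) = \frac{2}{\epsilon^2}$,
\[
\textrm{Var}(s) = \frac{1}{c^2}\left( n_1 c(1-c) + \frac{2}{\epsilon^2} \right) = \frac{n_1(1-c)}{c} + \frac{2}{\epsilon^2 c^2}.
\]

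Finally I would apply Chebyshev's inequality: $\Pr[|s - n_1| \ge k] \le \textrm{Var}(s)/k^2$. Setting $k = \sqrt{3\,\textrm{Var}(s)} = \sqrt{3\left(\frac{n_1(1-c)}{c} + \frac{2}{\epsilon^2 c^2}\right)}$ makes the right-hand side exactly $\frac{1}{3}$, which combined with the truncation inequality yields $\Pr[|\hat{s} - n_1| \ge k] \le \frac{1}{3}$, i.e.\ the claimed accuracy. There is no genuinely deep step here; the two points that need care are the truncation inequality, which relies essentially on $n_1 \in [0,n]$, and the independence of the binomial sampling from the Laplacian noise, which lets the variances simply add. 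I expect the truncation observation to be the only place a careless argument could go wrong, since without the containment $n_1 \in [0,n]$ the clamping could in principle increase the error.
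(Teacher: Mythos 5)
Your proposal is correct and follows essentially the same route as the paper's proof: bound $|\hat{s}-n_1|$ by $|s-n_1|$ via the truncation observation, note $E[s]=n_1$, compute the variance $\frac{n_1(1-c)}{c}+\frac{2}{\epsilon^2 c^2}$ of $s$ using independence of the binomial sample and the Laplacian noise, and apply Chebyshev's inequality with threshold $\sqrt{3\,\mathrm{Var}(s)}$. Your explicit justification of the truncation step (projection onto $[0,n]\ni n_1$ is a contraction) is a slightly more careful statement of what the paper asserts without comment, but the argument is the same.
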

\begin{proof}
Since the error term $|\hat{s} - n_1|$ is smaller than $|s - n_1|$, we can analyze $|s - n_1|$ to get a bound on the error. Since $E[m] = cn_1$, $E[s] = \frac{1}{c}(E[m] + E[\textrm{lap}(\frac{1}{\epsilon})]) = n_1$ by linearity of expectation.
{
\allowdisplaybreaks
\begin{align*}
|\hat{s} - n_1| ~ \leq ~ |s - n_1|  &=  \frac{1}{c}|m + \textrm{lap}(\frac{1}{\epsilon}) - n_1c| ~ = ~ \frac{1}{c}|\textrm{bin}(n_1, c) + \textrm{lap}(\frac{1}{\epsilon}) - E[\textrm{bin}(n_1, c) + \textrm{lap}(\frac{1}{\epsilon})]|.
\end{align*}
}
In order to prove that accuracy with high probability, we use Chebyshev's inequality.
\begin{lemma}[Chebyshev's inequality]
\label{prop:cheb}
Let $X$ be a random variable with expected value $\mu$ and variance $\sigma^2$. For any real number $k > 0$,
$\Pr[|X - \mu| \geq k\sigma] \leq \frac{1}{k^2}$.
\end{lemma}

We set $k = \sqrt{3}$ and let $X \sim \textrm{bin}(n_1, c) + \textrm{lap}(\frac{1}{\epsilon})$ with $Var[X] = n_1c(1-c) + \frac{2}{\epsilon^2}$ to get
\[ \Pr\left[|\textrm{bin}(n_1, c) + \textrm{lap}(\frac{1}{\epsilon}) - E[\textrm{bin}(n_1, c) + \textrm{lap}(\frac{1}{\epsilon})]| \geq \sqrt{3(n_1c(1-c) + \frac{2}{\epsilon^2})}\right] \leq \frac{1}{3}.  \]
This is equivalent to
\[ \Pr\left[\frac{1}{c}|\textrm{bin}(n_1, c) + \textrm{lap}(\frac{1}{\epsilon}) - E[\textrm{bin}(n_1, c) + \textrm{lap}(\frac{1}{\epsilon})]| \geq \sqrt{3(\frac{n_1(1-c)}{c} + \frac{2}{\epsilon^2 c^2})}\right] \leq \frac{1}{3}.  \]

Thus, $\Pr \left[|\hat{s} - n_1| \geq \sqrt{3(\frac{n_1(1-c)}{c} + \frac{2}{\epsilon^2 c^2})} \right] \leq \frac{1}{3}$.
\end{proof}

The mechanism can pick $c$ freely. If the mechanism picks a constant $c$ such that $\frac{n(1-c)}{c} \leq \frac{2}{\epsilon^2 c^2}$, the mechanism is $O(\epsilon^{-1})$ accurate.

We will extend this result to general data entry and discrete cost distributions in Section \ref{sec:ext}. Thus, we have the main theorem.
\begin{theorem}
Mechanism 1 is BIC, EIIR, $O(\epsilon^{-1})$-accurate, perfectly data private, and $\epsilon$-differentially private.\qed
\end{theorem}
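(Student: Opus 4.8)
The plan is to read this theorem as a synthesis: each of its five conclusions has already been isolated, for the two-type continuous-cost case ($h=2$ with $F_1,F_2$ continuous) under which Mechanism~1 was described, in one of the preceding lemmas, so the proof is mostly a matter of citing them and then appealing to the extension of Section~\ref{sec:ext} to lift everything to general $[h]$ and to discrete cost distributions. Concretely, perfect data privacy is Lemma~\ref{lem:safe}; BIC and EIIR are Lemma~\ref{lem:truthful}; and $\epsilon$-differential privacy is the conjunction of Lemma~\ref{lem:is-privacy} (the estimate $\hat s$ is $\epsilon$-differentially private on adjacent cost-data pairs) and Lemma~\ref{lem:ip-privacy} (each payment $p_i$ is $\epsilon$-differentially private). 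These two lemmas match exactly the two clauses of the definition of an $\epsilon$-differentially private mechanism given in Section~\ref{sec:bmd}, namely that \emph{the estimate and payments} satisfy $\epsilon$-differential privacy, so no composition argument across the two outputs is required.

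For accuracy, I would start from the explicit bound $\sqrt{3\left(\frac{n_1(1-c)}{c}+\frac{2}{\epsilon^2 c^2}\right)}$ of Lemma~\ref{lem:d-accuracy}, replace $n_1$ by its upper bound $n$, and then select the free parameter $c$ as indicated in the remark following that lemma, balancing the sampling-variance term $\frac{n(1-c)}{c}$ against the Laplace-noise term $\frac{2}{\epsilon^2 c^2}$. With $c$ chosen so that the former does not exceed the latter, the bound collapses to a constant multiple of $\frac{1}{\epsilon c}=O(\epsilon^{-1})$, which is the desired accuracy, and it is independent of the database size as required by the accuracy definition.

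The genuinely remaining work --- and the step I expect to be the main obstacle --- is not any of the individual properties in the binary continuous case, where everything reduces to the cited lemmas, but rather verifying that the construction and all five guarantees survive the extension promised in Section~\ref{sec:ext}. There I would need to ensure $F_j(\alpha_j)=c$ remains solvable when the $F_j$ are discrete, where exact equality of the acceptance probability across types --- the property that drives both perfect data privacy in Lemma~\ref{lem:safe} and the unbiasedness used in Lemma~\ref{lem:d-accuracy} --- may fail at an atom $\alpha_j$ and must be restored, e.g.\ by randomized tie-breaking so that a type-$j$ player still accepts with probability exactly $c$. I would also re-examine the $\epsilon$-mutual-boundedness arguments of Lemmas~\ref{lem:is-privacy} and~\ref{lem:ip-privacy} for $h>2$, checking that adjacency still changes the count $m$ by at most one and that $|\alpha_j-\alpha_{j'}|\le\gamma$ continues to hold for every pair of types. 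Once these extensions are in place, the synthesis above yields the theorem in full generality.
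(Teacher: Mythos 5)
Your proposal is correct and follows essentially the same route as the paper: the theorem is proved there exactly by combining Lemma~\ref{lem:safe} (perfect data privacy), Lemma~\ref{lem:truthful} (BIC and EIIR), Lemmas~\ref{lem:is-privacy} and~\ref{lem:ip-privacy} (differential privacy of estimate and payments), and Lemma~\ref{lem:d-accuracy} with a constant $c$ satisfying $\frac{n(1-c)}{c} \leq \frac{2}{\epsilon^2 c^2}$ for $O(\epsilon^{-1})$-accuracy, together with the Section~\ref{sec:ext} extension (randomized contracts over $\alpha_j^-,\alpha_j^+$ for discrete distributions), which is precisely the tie-breaking device you describe. Your only deviation is treating the lemmas as restricted to two types, whereas the paper's lemma proofs already quantify over all $j \in [h]$; this makes your extension step slightly more cautious than necessary but introduces no gap.
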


\subsection{Extensions and Computational Issues} \label{sec:ext}
\noindent{\bf General Database Entries.} Suppose that the entry of database has $d$ attributes, that is, $D_i \in [h]^d$. Given a sequence $a_1, \dots, a_d$, where $a_j \in [h]$, the data analyst wants to estimate  $|\{i : \forall_j D_{ij}  = a_j\}|$. For any $D_i$, we can transform $D_i$ to a single attribute data $D_i' = 1 + \sum_{i=0}^{d-1} D_{ij} \times d^i$, such that $D_i' \in [h^d]$. Then, we can apply the mechanism to estimate the number of players with $D_i' = 1 + \sum_{i=0}^{d-1} a_j \times d^i$.

\vspace{0.5em}\noindent{\bf Discrete Cost Distributions.}
When $F_j$ is a discrete probability function, the major difficulty is that for a given $c$ and $j$, we may not find a suitable $\alpha_j$, such that $F_j(\alpha_j) = c$, because the cumulative probability function of a discrete distribution is a step function. However, the mechanism can provide different contracts to different players and this ability allows us to design a mechanism for discrete case.

The basic idea is that the mechanism uses randomness to pick $\alpha_j$ such that every player has equal probability $c$ to accept the contract. For a given $c$ and for each $j$, if there is no $\alpha_j$ such that $F_j(\alpha_j) = c$, then the mechanism finds the largest $\alpha_j^-$ and the smallest $\alpha_j^+$ such that $F_j(\alpha_j^-) = c_j^- < c$ and $F_j(\alpha_j^+) = c_j^+ > c$. Note that a player $i$ with $D_i = j$ accepts the contract if his cost is smaller than the expected payment. If the expected payment is $\alpha_j^+$, then the player accepts the contract with probability $c_j^+ > c$. On the other hand, if the expected payment is $\alpha_j^-$, then the player accepts the contract with probability $c_j^- < c$. Let $\beta_j = \frac{c - c_j^-}{c_j^+ - c_j^-}$. Player $i$ is given a contract ``If $D_i = j$, your expected payment is $\epsilon \alpha_j$ in expectation," where $\Pr[\alpha_j = \alpha_j^-] = 1 - \beta_j$ and $\Pr[\alpha_j = \alpha_j^+] = \beta_j$. Thus, $\Pr[v_i \leq \alpha_j] = c_j^- + \beta_j (c_j^+ - c_j^-) = c$, where the randomness is over the distribution of costs and the random choice of $\alpha_j$. We can prove that the mechanism is perfectly data private, BIC, and EIIR by arguments similar to those in the proofs of Lemmas \ref{lem:safe} and \ref{lem:truthful}. Since every player has equal probability $c$ to accept the contract, we can show that the mechanism satisfies $\epsilon$-differential privacy of estimate and is $O(\epsilon^{-1})$-accurate by arguments similar to those in the proofs of Lemmas \ref{lem:is-privacy} and \ref{lem:d-accuracy}. In order to satisfy differential privacy of payments, we let $\gamma := \max_j \alpha_j^+ - \min_j \alpha_j^-$. Then, the payments satisfy $\epsilon$-differential privacy by an argument similar to the proof of Lemma \ref{lem:ip-privacy}.

\vspace{0.5em}\noindent{\bf Cost of Mechanism.}
For a fixed $\epsilon$, when $c$ increases, the accuracy of Mechanism is improved, since Mechanism 1 uses more players' data.
However, Mechanism 1's expected total payment also increases.
Since Mechanism 1 is $\sqrt{3(\frac{n_1(1-c)}{c}+\frac{2}{\epsilon^2c^2})}$-accurate, there is a trade-off between the accuracy and the expected total payment.
Since the mechanism can pick $c$ freely, for a given $\epsilon > \sqrt{\frac{8}{n}}$, the mechanism can pick $c = \frac{1 + \sqrt{1 - 8\epsilon^{-2}/n}}{2}$. 
Let $\alpha = \max_j \alpha_j$. 
The expected total payment is $\epsilon \alpha cn = \alpha n(\frac{\epsilon + \sqrt{\epsilon^2 - 8/n}}{2})$. Then, Mechanism 1 picks a suitable $\epsilon$, such that the expected total payment $\epsilon \alpha cn = B$.
Hence, the mechanism is budget-feasible in expectation and is $O(\frac{1}{\epsilon c}) = O(\frac{\alpha n}{B})$ accurate. 

\vspace{0.5em}\noindent{\bf Fixed Accuracy.}
If the data analyst wants a $k$-accurate mechanism, we can pick $c = \frac{1}{1 + k^2/6n}$ and $\epsilon = \frac{2\sqrt{3}(1 + k^2/6n)}{k}$, such that the mechanism is $k$-accurate. The expected total payment is $\epsilon \alpha cn = \frac{2\sqrt{3} \alpha n}{k}$.

\vspace{0.5em}\noindent{\bf Computing $\mathbf{F^{-1}(c)}$.}
In an ideal model, when $F_j$ is a continuous distribution, we assume that mechanism can access the closed form of $F_j$, such that the mechanism can compute $\alpha = F^{-1}_j(c)$ accurately. 
However, when the mechanism cannot access the closed form of $F_j$, $F^{-1}_j(c)$ may not be computable. 
When it is impossible to access the closed form of $F_j$, we assume that there is an oracle, which returns $F_j(v)$ for any given value $v$. 
In the oracle model, the mechanism finds $\alpha_j^-$, $\alpha_j^+$ for all $j$, such that $F_j(\alpha_j^-) < c$, $F_j(\alpha_j^+) > c$, and $\alpha_j^+ - \alpha_j^- < \delta$ for $\delta < 1/n$ using binary search. 
Then, the mechanism uses the method that we use for discrete cost distributions to construct the contract.
That is, let $\beta_j = \frac{c - c_j^-}{c_j^+ - c_j^-}$. 
Player $i$ is given a contract ``If $D_i = j$, your expected payment is $\epsilon \alpha_j$ in expectation," where $\Pr[\alpha_j = \alpha_j^-] = 1 - \beta_j$ and $\Pr[\alpha_j = \alpha_j^+] = \beta_j$. 
Thus, $\Pr[v_i \leq \alpha_j] = c_j^- + \beta_j (c_j^+ - c_j^-) = c$, where the randomness is over the distribution of costs and the random choice of $\alpha_j$.
Hence, the mechanism is still perfectly data private, BIC, EIIR, $\epsilon$-differential private, and $O(1/\epsilon)$-accurate. 
In the oracle model, the expected payment for player $i$ with $D_i = j$ who accepts the contract is at most $\alpha_j^+$. 
In the ideal model, the expected payment for player $i$ with $D_i = j$ who accepts the contract is $F^{-1}_j(c)$, which is smaller than $\alpha_j^+$.
Since $\alpha_j^+ - F^{-1}_j(c)$ is at most $\delta < 1/n$, the difference between the expected payments in the ideal model and in the oracle model is at most $1/n$ for each player.
Thus, the difference between the expected total payment in the ideal model and in the oracle model is at most 1.
\section{Optimality} \label{sec:opt}
In this section, we define a benchmark for the expected payment of a mechanism and compare the expected payment of Mechanism 1 to this benchmark in two different settings.
The benchmark mechanism is not only truthful but also knows $D_i$ for all $i$ and has no privacy requirements.
We show that when all costs are non-negative, Mechanism 1 is provably close to the benchmark. 

The benchmark is the minimum expected payment among all truthful mechanisms $M^*$ that satisfy the following properties.
In order to get a meaningful estimate, for any $k < n/2$, a $k$-accurate mechanism learns a subset of players' data.
We call this subset a \emph{sample set}. 
Since obtaining an estimate based on an unbiased sample is a common approach in statistics, we assume that $M^*$ uses an unbiased sample. 
Suppose that there are $n_j$ players with $D_i = j$ for $j \in [h]$. 
Since the sample set is unbiased, there exists $c$ such that $M^*$ buys $w_j = cn_j$ data from players with $D_i = j$. 
After getting an unbiased sample, $M^*$ uses $w_1/c$ as the straightforward estimate of $n_1$.
Since the choices of $c$ may effect the accuracy guarantee, we compare the payment of Mechanism 1 to the payment of $M^*$, where Mechanism 1 and $M^*$ have the same size of sample sets. 
Thus, $M^*$ is a truthful mechanism that gets an unbiased sample with size $cn$ for a fixed number $c$.

Since there is no competition between players with data $j$ and players with data $j' \neq j$, $M^*$ can run auctions for players with $D_i = j$ for all $j \in [h]$ independently and buy $w_j$ data from players with $D_i = j$.
The mechanism that guarantees buying $w$ units is called \emph{$w$-unit procurement mechanism}.
Thus, $M^*$ is a mechanism that runs a truthful, $w_j$-unit procurement mechanism for each $j \in [h]$. 

Mechanism 1 buys in expectation $w_j$ data from players with $D_i = j$ for $j \in [h]$. 
We compare the expected payment of Mechanism 1 for buying in expectation $w_j$ data from players with $D_i = j$ with the expected payment of $M^*$ for buying $w_j$ data from players with $D_i = j$ for each $j$.
If the expected payment of Mechanism 1 is at most $r$ times the expected payment of $M^*$ for each $j$, then the total expected payment of Mechanism 1 is at most $r$ times the total expected payment of $M^*$.
Thus, we focus on a single auction that all players have the same $D_i$ and both Mechanism 1 and the $M^*$ want to buy $w$ data from $n$ players. 

For multi-unit procurement mechanisms, let $x_i$ be the indicator random variable denoting whether the mechanism buys from player $i$. Let $v_i$ be the cost to the player $i$, if $x_i = 1$. Let $p_i$ be the payment of player $i$. The utility for player $i$ is $p_i - x_i v_i$. Note that when we consider privacy preserving mechanisms, the utility of player $i$ is $p_i - \epsilon x_i v_i$. However, since $\epsilon$ is the same for all players, we can ignore $\epsilon$ without loss of generality, that is, scaling every player's cost by $\epsilon$. Without loss of generality, we suppose that players report costs $v_1 \leq v_2 \dots \leq v_n$.

\subsection{Envy-free Benchmark} 
A mechanism is \emph{envy-free} if for all $v$ and for all $i$, $j$, $p_i - v_i x_i \geq p_j - v_i x_j$. 
We show that for any envy-free, multi-unit procurement mechanism, every data that is bought by the mechanism  is purchased at the same price.
Suppose that a multi-unit procurement mechanism buys data from two players at two different prices.
Since the player with the lower price wants to have the higher price, the mechanism is not envy-free.
We compare the expected payment of Mechanism 1 with the expected payment of the optimal, envy-free, dominant strategy truthful, multi-unit procurement mechanism. 
We use envy-free mechanisms as a benchmark, because for procurement mechanisms in a Bayesian setting, the optimal mechanisms are known to charge a fixed price.\footnote{Envy-free benchmarks are also common in prior-free mechanism design~\cite{Hart08}.} 

We introduce another commonly used solution concept as follows.
A profile of strategies $q_1, \dots, q_n$ is a \emph{dominant strategy equilibrium} if for all $i$, $v_i$,$v_{-i}$, and $y_i' \in Y$, $E[u_i(q(v_i, v_{-i}, D), v_i)] \geq E[u_i((y_i', q_{-i}(v_{-i}, D_{-i})), v_i)]$, where the randomness is from the mechanism. 
A direct mechanism is \emph{dominant strategy truthful} if $q_i(v_i, D_i) = v_i$ is a dominant strategy equilibrium for every player $i$. 
The following lemma characterizes the total payment for any dominant strategy truthful, EPIR, and envy-free mechanisms.

\begin{lemma}[Theorem 4.6 in~\cite{Roth11}]\label{prop:p-lower}
No dominant strategy truthful, EPIR, and envy-free $w$-unit procurement mechanism can have total payment less than $w v_{w+1}$. \qed
\end{lemma}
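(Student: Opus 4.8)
The plan is to reduce the claim to a single inequality about the common per-unit price and then to establish that inequality by combining envy-freeness with the threshold characterization of dominant-strategy-truthful mechanisms. First I would invoke the uniform-pricing observation already made in the discussion above Lemma~\ref{prop:p-lower}: in any envy-free multi-unit procurement mechanism, all purchased units are bought at one common price $p$. Since the mechanism is a $w$-unit mechanism it selects at least $w$ winners, and by ex-post individual rationality every non-winner's payment is non-negative, so the total payment is at least $wp$. Hence it suffices to prove $p \geq v_{w+1}$, after which the bound $w v_{w+1}$ is immediate.

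To prove $p \geq v_{w+1}$ I would split into two cases according to the costs of the winners in the profile $v_1 \leq \dots \leq v_n$. If every winner $i$ has $v_i \geq v_{w+1}$, then ex-post individual rationality already forces $p \geq v_i \geq v_{w+1}$ and we are done. Otherwise fix a winner $i$ with $v_i < v_{w+1}$ and consider its critical value $\theta_i = \sup\{v : i \text{ is still selected when it reports } v \text{ and the others report } v_{-i}\}$; by the standard payment identity for single-parameter truthful mechanisms (with the normalization that a non-selected player is paid $0$), the payment to $i$ equals $\theta_i$, and by uniform pricing $p = \theta_i$. I would then show $\theta_i \geq v_{w+1}$. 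Suppose instead that $i$ reported some $\hat{v} < v_{w+1}$ and were not selected. The mechanism must still buy $w$ units, now from the remaining $n-1$ players, of which only $w-1$ have cost below $v_{w+1}$ (player $i$ was one of the $w$ cheapest and has been removed); hence at least one winner $k$ in this deviated profile has $v_k \geq v_{w+1}$, and individual rationality gives that $k$'s payment, which by uniform pricing is the new common price $p'$, satisfies $p' \geq v_{w+1}$. But then the envy-freeness constraint of the non-selected player $i$ toward $k$ reads $0 \geq p' - \hat{v}$, i.e. $\hat{v} \geq p' \geq v_{w+1}$, contradicting $\hat{v} < v_{w+1}$. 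Therefore $i$ is selected for every report below $v_{w+1}$, so $\theta_i \geq v_{w+1}$ and $p \geq v_{w+1}$.

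The main obstacle is this last deviation argument, where the interplay of the three hypotheses has to be handled carefully: dominant-strategy truthfulness is what lets me equate a winner's payment with its critical value and analyze misreports with the other bids frozen; envy-freeness is what forces a single price and converts the existence of a forced expensive winner into a lower bound on that price; and ex-post individual rationality is what turns ``a player with cost at least $v_{w+1}$ wins'' into ``some winner is paid at least $v_{w+1}$''. I would also dispose of two routine technical points: that it is without loss of generality to take non-winner payments to be zero, since paying losers only increases the total and hence cannot beat the bound, and that ties at the value $v_{w+1}$ can be handled by a limiting argument so that the counting step ``only $w-1$ of the remaining players lie strictly below $v_{w+1}$'' remains valid.
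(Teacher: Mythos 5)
The paper itself gives no proof of Lemma~\ref{prop:p-lower}; it is imported with a citation to Theorem 4.6 of~\cite{Roth11}, so your proposal can only be measured against that source's argument --- which it reconstructs essentially faithfully. The skeleton is right and matches the known proof: envy-freeness between winners forces a single price $p$, EPIR reduces the claim to $p \geq v_{w+1}$, and the key step is the deviation argument in which a winner $i$ with $v_i < v_{w+1}$ who misreports some $\hat{v} < v_{w+1}$ and loses forces the mechanism (which must still buy $w$ units, while only $w-1$ of the remaining players cost less than $v_{w+1}$) to buy from some player $k$ with $v_k \geq v_{w+1}$, whose EPIR payment then makes the now-losing player $i$ envious at the reported profile. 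Your counting step, your use of envy-freeness at the \emph{reported} (deviated) profile, and your application of EPIR to the truthful player $k$ are all handled correctly.

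The one step whose justification does not hold up as written is the claimed normalization: ``it is without loss of generality to take non-winner payments to be zero, since paying losers only increases the total.'' That reasoning is circular. To reduce to the normalized case you must exhibit, for every non-normalized mechanism, a normalized one that has pointwise smaller total payment \emph{and is still DST, EPIR and envy-free}; the natural candidate (same allocation, threshold payments, zero loser payments) can fail envy-freeness, because envy-freeness of the original mechanism equalizes $b_i + \theta_i$ across winners (where $b_i$ is what $i$ would be paid if it lost and $\theta_i$ is its critical value), not the thresholds $\theta_i$ themselves. And your argument genuinely uses the normalization in two places: the identity ``payment $= \theta_i$,'' and the envy constraint $0 \geq p' - \hat{v}$, which with a loser payment $q' \geq 0$ only gives $\hat{v} \geq p' - q'$ --- no contradiction. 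Two repairs are available. (i) Avoid normalization entirely: replace ``payment equals $\theta_i$'' by ``payment is at least $\theta_i$'' (all the bound needs), and in the deviated profile use \emph{both} directions of envy-freeness --- loser $i$ toward winner $k$ gives $q' \geq p' - \hat{v}$, winner $k$ toward loser $i$ gives $p' - v_k \geq q'$, hence $v_k \leq \hat{v} < v_{w+1}$, contradicting $v_k \geq v_{w+1}$. (ii) Alternatively, derive the normalization rather than assume it: under the convention stated before Lemma~\ref{prop:identity} that a report of $\infty$ is paid nothing, dominant strategy truthfulness forces all losing reports (for fixed $v_{-i}$) to receive equal payment, hence zero. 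Either patch is local, so the gap is real but not structural.
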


Let $F$ be the cumulative distribution function of players' costs, that is, $F(a) = \Pr[v \leq a]$. By Lemma \ref{prop:p-lower}, the total expected payment of any dominant strategy truthful, EPIR, and envy-free $w$-unit procurement mechanism is at least $wE_{v \sim F}[v_{w+1}]$. Thus, our benchmark is $wE_{v \sim F}[v_{w+1}]$.

Now, we compare the benchmark with the expected payment of Mechanism 1. There are two cases. First, when there exists $\alpha$ such that $F(\alpha) = \frac{w}{n}$, Mechanism 1 offers a posted price $\alpha$ for each player in order to buy $w$ players' data in expectation. If player $i$ accepts the price, the mechanism buys from player $i$ with expected payment $\alpha$. Since each player has probability $\frac{w}{n}$ to accept the contract, the total expected payment of Mechanism 1 is $w\alpha$. 

Second, when there is no $\alpha$ such that $F(\alpha) = \frac{w}{n}$, we give an extension to Mechanism 1 in Section \ref{sec:ext}. The extension finds the largest $\alpha^-$ and the smallest $\alpha^+$, such that $F(\alpha^-) < \frac{w}{n}$ and $F(\alpha^+) > \frac{w}{n}$. Let $c^- := F(\alpha^-)$, $c^+ := F(\alpha^+)$, and $\beta := \frac{\frac{w}{n} - c^-}{c^+ - c^-}$. Then, the mechanism offers a price $\alpha^+$ with probability $\beta$ and price $\alpha^-$ with probability $1 - \beta$. For a player with cost at most $\alpha^-$, since the player always accepts the offer, the expected payment is $(\alpha^- (1-\beta) + \alpha^+ \beta)$. For a player with cost equal to $\alpha^+$, since the player accepts the offer only when the offered price is $\alpha^+$, the expected payment is $\alpha^+\beta$. For a player with cost larger than $\alpha^+$, since the player always rejects the offer, the expected payment is 0. Since each player has a cost at most $\alpha^-$ with probability $c^-$ and has a cost equal to $\alpha^+$ with probability $c^+ - c^-$, each player's expected payment is $c^- (\alpha^-(1-\beta) + \alpha^+ \beta) + (c^+ - c^-)\alpha^+ \beta$. Thus, the total expected payment is $n (c^- (\alpha^-(1-\beta) + \alpha^+ \beta) + (c^+ - c^-)\alpha^+ \beta)$ by the linearity of expectation. Moreover, 
{
\allowdisplaybreaks
\begin{align*}
n (c^- (\alpha^-(1-\beta) + \alpha^+ \beta) + (c^+ - c^-)\alpha^+ \beta) ~&=~ n(c^- (\alpha^-(1-\beta) + \alpha^+ \beta) + (\frac{w}{n} - c^-)\alpha^+)\\
                                             &=~ n(\frac{w}{n} \alpha^+ + c^-(\alpha^-(1-\beta) + \alpha^+ \beta - \alpha^+)) \\
                                             &=~ n(\frac{w}{n} \alpha^+ + c^-((1 - \beta)(\alpha^- - \alpha^+))\\
                                             &=~ n(\frac{w}{n} \alpha^+ - c^-((1 - \beta)(\alpha^+ - \alpha^-)))\\
                                             &=~ w(\alpha^+ - \frac{nc^-}{w}(1-\beta)(\alpha^+ - \alpha^-)).
\end{align*}
}

When there exists $\alpha$, such that $F(\alpha) = \frac{w}{n}$, the expected payment of Mechanism 1 is $w\alpha$. When $\alpha$ does not exist, the expected payment is $w(\alpha^+ - \frac{nc^-}{w}(1-\beta)(\alpha^+ - \alpha^-))$. Thus, we should compare both $w\alpha$ and $w(\alpha^+ - \frac{nc^-}{w}(1-\beta)(\alpha^+ - \alpha^-))$ with $wE_{v \sim F}[v_{w+1}]$. It suffices to compare $\alpha$ and $\alpha^+ - \frac{nc^-}{w}(1-\beta)(\alpha^+ - \alpha^-)$ with $E_{v \sim F}[v_{w+1}]$. 

\begin{lemma}\label{lem:2app}
1. If there exists $\alpha$ such that $F(\alpha) = \frac{w}{n}$, then $E_{v \sim F}[v_{w+1}] \geq  \frac{1}{2} \alpha$. \\
2. If there is no $\alpha$ such that $F(\alpha) = \frac{w}{n}$, then $E_{v \sim F}[v_{w+1}] \geq \frac{1}{2}(\alpha^+ - \frac{n}{w}c^-(1-\beta)(\alpha^+ - \alpha^-))$. 
\end{lemma}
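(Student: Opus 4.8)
The plan is to bound the benchmark $E_{v\sim F}[v_{w+1}]$ from below by relating the $(w+1)$-th order statistic to the quantile of $F$ at level $w/n$, and to exploit non-negativity of the costs to discard the lower tail. The key tool is the standard fact that the median of $\mathrm{Bin}(n,p)$ lies between $\lfloor np\rfloor$ and $\lceil np\rceil$; consequently, for any integer $m\ge np$ one has $\Pr[\mathrm{Bin}(n,p)\le m]\ge\frac12$. This is exactly what produces the factor $\frac12$ in both parts.

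For part 1, I would set $N:=|\{i:v_i\le\alpha\}|\sim\mathrm{Bin}(n,w/n)$, which has mean exactly $w$ since $F(\alpha)=w/n$. Because $F$ is continuous, $\Pr[v_{w+1}\ge\alpha]=\Pr[N\le w]$, and the median fact gives $\Pr[N\le w]\ge\frac12$. Splitting the expectation as $E[v_{w+1}]=E[v_{w+1}\mathbf{1}_{\{v_{w+1}\ge\alpha\}}]+E[v_{w+1}\mathbf{1}_{\{v_{w+1}<\alpha\}}]$, the second term is $\ge 0$ because costs are non-negative, and the first is $\ge\alpha\Pr[v_{w+1}\ge\alpha]\ge\frac12\alpha$. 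This yields $E[v_{w+1}]\ge\frac12\alpha$.

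For part 2, the same idea applies with $\alpha^+$ in place of $\alpha$. Since $\alpha^-$ is the largest point with $F=c^-$ and $\alpha^+$ the smallest with $F=c^+$, no probability mass lies strictly between them, so the number of draws strictly below $\alpha^+$ equals $N^-:=|\{i:v_i\le\alpha^-\}|\sim\mathrm{Bin}(n,c^-)$. On $\{N^-\le w\}$ we have $v_{w+1}\ge\alpha^+$, and since $nc^-<w$ the median fact again gives $\Pr[v_{w+1}\ge\alpha^+]=\Pr[N^-\le w]\ge\frac12$; non-negativity then yields $E[v_{w+1}]\ge\frac12\alpha^+$. It remains only to observe that the target quantity $A:=\alpha^+-\frac{n}{w}c^-(1-\beta)(\alpha^+-\alpha^-)$ satisfies $A\le\alpha^+$, because the subtracted term is a product of non-negative factors; hence $E[v_{w+1}]\ge\frac12\alpha^+\ge\frac12 A$, as claimed. (Combined with the earlier computation that Mechanism~1 pays $wA$ and the benchmark is $wE[v_{w+1}]$, this gives the $2$-approximation.)

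The main obstacle is the binomial-median step: everything hinges on $\Pr[N\le w]\ge\frac12$, which requires the (standard but non-trivial) bound that the binomial median is at most $\lceil np\rceil$, together with $w$ being an integer dominating the mean. A secondary point needing care is the reduction in part 2 --- justifying that no cost falls in the open interval $(\alpha^-,\alpha^+)$, so that $N^-$ is genuinely $\mathrm{Bin}(n,c^-)$ and the event $\{N^-\le w\}$ forces $v_{w+1}\ge\alpha^+$; this is where the step-function structure of $F$ in the ``no-$\alpha$'' case must be invoked.
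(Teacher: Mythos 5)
Your proposal is correct and takes essentially the same route as the paper's proof: both arguments discard the lower tail via non-negativity of costs, reduce the tail event to $\Pr[v_{w+1}\ge\alpha^+]=\Pr[\textrm{bin}(n,c^-)\le w]$ using the absence of probability mass in $(\alpha^-,\alpha^+)$, and finish with the binomial-median fact ($\lfloor np\rfloor\le m\le\lceil np\rceil$) together with $\lceil nc^-\rceil\le w$. The only (cosmetic, and slightly advantageous) difference is that you bound $E[v_{w+1}]\ge\frac12\alpha^+$ and then note $\alpha^+\ge\alpha^+-\frac{n}{w}c^-(1-\beta)(\alpha^+-\alpha^-)$, and you prove part 1 directly with $\textrm{Bin}(n,\frac{w}{n})$, whereas the paper works with $\eta=\alpha^+-\frac{n}{w}c^-(1-\beta)(\alpha^+-\alpha^-)$ throughout (first establishing $\alpha^-<\eta<\alpha^+$) and obtains part 1 as the special case $\alpha^-=\alpha^+=\alpha$.
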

\begin{proof}
We show the second statement. The first statement follows by setting $\alpha^- = \alpha^+ = \alpha$.

Let $\eta = \alpha^+ - \frac{nc^-}{w}(1-\beta)(\alpha^+ - \alpha^-)$. By conditional probability, 
\begin{align*}
E[v_{w+1}] &= \Pr[v_{w+1} \leq \eta] \times E[v_{w+1} \mid v_{w+1} \leq \eta] + \Pr[v_{w+1} > \eta] \times E[v_{w+1} \mid v_{w+1} > \eta] \\
           &\geq \Pr[v_{w+1} > \eta] \times \eta \text{ \     \ (costs are non-negative).}
\end{align*}
It suffices to show that $\Pr[v_{w+1} > \eta] \geq \frac{1}{2}$. Since $c^- < \frac{w}{n}$, $\frac{nc^-}{w} < 1$. Since $\beta < 1$ and $\frac{nc^-}{w} < 1$, $\alpha^- < \eta < \alpha^+$. If $v_{w+1} > \eta$, then $v_{w+1} \geq \alpha^+$, since $\alpha^+$ is the smallest number larger than $\alpha^-$ with non-zero probability. Let $v_{(i)}$ denote the cost of player $i$. If $v_{w+1} \geq \alpha^+$, then at most $w$ players' $v_{(i)}$ are no larger than $\alpha^-$. Since each $v_{(i)}$ is independently drawn from $F$, $\Pr[v_{(i)} \leq \alpha^-] = c^-$. Let $X_i$ be the indicator random variable such that $X_i = 1$ if $v_{(i)} \leq \alpha^-$, otherwise $X_i = 0$. Let $X = \sum_{i=1}^n X_i$. The probability that at most $w$ players have $v_{(i)}$ no larger than $\alpha^-$ is $\Pr[X \leq w]$. Since the $X_i$'s are independent, identical, indicator random variables, $X$ is a random variable from a binomial distribution $\textrm{Bin}(n, c^-)$. Thus, $\Pr[v_{w+1} \geq \alpha^+] = \Pr[\textrm{bin}(n, c^-) \leq w]$.

Now, we show that $\Pr[\textrm{bin}(n, c^-) \leq w] \geq \frac{1}{2}$. We say $m$ is the \emph{median} of a distribution $D$ over real numbers if, $\Pr[Z \leq m ] \geq \frac{1}{2}$ and $\Pr[Z \geq m] \geq \frac{1}{2}$, where $Z$ is a random variable drawn from $D$. For a binomial distribution $\textrm{Bin}(n, p)$, the expected value $np$ and the median $m$ satisfy $\lfloor np \rfloor \leq m \leq \lceil np \rceil$~\cite{Kaas80}. Since $c^- < \frac{w}{n}$, the expected value of $\textrm{bin}(n, c^-)$ is smaller than $w$. Since $\lceil nc^- \rceil \leq w$, the median $m$ of $\textrm{Bin}(n, c^-)$ is at most $w$. Thus, $\Pr[\textrm{bin}(n, c^-) \leq w] \geq \frac{1}{2}$.
\end{proof}

Lemmas \ref{prop:p-lower} and \ref{lem:2app} imply the following theorem.
\begin{theorem} \label{thm:envy}
Mechanism 1's expected payment is 2-approximate to the benchmark. \qed
\end{theorem}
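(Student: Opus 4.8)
The plan is to obtain the theorem as a direct corollary of Lemmas~\ref{prop:p-lower} and~\ref{lem:2app}, together with the per-type decomposition established above. First I would recall that, as argued in the paragraphs preceding Lemma~\ref{prop:p-lower}, it suffices to prove the $2$-approximation within a single auction in which all players share the same data value $j$ and both Mechanism~1 and $M^*$ buy $w = w_j$ units from $n = n_j$ players: if the bound holds for each type $j \in [h]$ separately, then by linearity of expectation the total expected payment of Mechanism~1 is at most twice the total expected payment of $M^*$.

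Within such a single auction, the benchmark is pinned down by Lemma~\ref{prop:p-lower}: any dominant strategy truthful, EPIR, envy-free $w$-unit procurement mechanism pays at least $w v_{w+1}$, so taking expectation over $v \sim F$ the benchmark equals $w E_{v \sim F}[v_{w+1}]$. On the other side, I would use the two payment computations carried out above: Mechanism~1's expected payment is $w\alpha$ when an $\alpha$ with $F(\alpha)=\frac{w}{n}$ exists, and $w\Paren{\alpha^+ - \frac{nc^-}{w}(1-\beta)(\alpha^+-\alpha^-)}$ otherwise.

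The two cases now line up exactly with the two statements of Lemma~\ref{lem:2app}. In the first case the lemma gives $E_{v\sim F}[v_{w+1}] \ge \frac{1}{2} \alpha$, hence $w E_{v\sim F}[v_{w+1}] \ge \frac{1}{2} (w\alpha)$; in the second it gives $E_{v\sim F}[v_{w+1}] \ge \frac{1}{2}\Paren{\alpha^+ - \frac{nc^-}{w}(1-\beta)(\alpha^+-\alpha^-)}$, hence again the benchmark is at least half of Mechanism~1's payment. Rearranging, Mechanism~1's per-auction expected payment is at most $2$ times the benchmark in both cases, and summing over $j \in [h]$ completes the proof.

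I do not expect any genuine obstacle in this final step: all of the analytic content has already been discharged in Lemma~\ref{lem:2app}, whose crux is the bound $\Pr[\mathrm{bin}(n,c^-)\le w]\ge \frac{1}{2}$ obtained from the binomial median inequality $\lfloor np\rfloor \le m \le \lceil np\rceil$. Once that lemma and the envy-free lower bound of Lemma~\ref{prop:p-lower} are in hand, the theorem reduces to matching the two payment expressions against the two cases of the lemma and invoking linearity across types.
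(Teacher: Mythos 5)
Your proposal is correct and follows essentially the same route as the paper: the paper's proof of Theorem~\ref{thm:envy} is precisely the observation that the per-type decomposition (argued before Lemma~\ref{prop:p-lower}), the benchmark value $wE_{v\sim F}[v_{w+1}]$ from Lemma~\ref{prop:p-lower}, and the two payment expressions for Mechanism~1 combine with the two cases of Lemma~\ref{lem:2app} to give the factor of $2$. Nothing in your write-up deviates from or adds to that argument, so it matches the paper's proof.
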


\subsection{Anti-regular Distributions}
In this section, we compare the expected payment of Mechanism 1 with the expected payment of the optimal BIC, multi-unit procurement mechanism.
We first characterize randomized BIC procurement mechanisms.
For a randomized mechanism and a given bid $v_i$, let $\bar{x}_i(v_i)$ be the probability that the mechanism buys from player $i$ and let $p_i(v_i)$ be the random variable denoting the payment for player $i$, where both $\bar{x}_i$ and $p_i$'s randomness come from the mechanism and $v_{-i}$.
Suppose that when $v_i = \infty$, the mechanism will not buy from player $i$. 
That is, $\bar{x}_i(\infty) = 0$ and $E[p_i(\infty)] = 0$. 
The characterization for the BIC, procurement mechanisms is analogous to the characterization of BIC selling mechanisms, which is a well-known result in auction theory. 
We provide a proof of the following characterization in the Appendix.
\begin{lemma}\label{prop:identity}
A randomized procurement mechanism is BIC if and only if for every $i$ the procurement probability $\bar{x}$ and payment $p$ satisfies \\
(i) $\bar{x}_i(v_i)$ is decreasing in $v_i$;\\
(ii) $E[p_i(v_i)] = v_i \bar{x}_i(v_i) + \int_{v_i}^{\infty} \bar{x}_i(t)dt$. \qed
\end{lemma}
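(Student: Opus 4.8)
The plan is to mimic Myerson's characterization of BIC selling mechanisms, adapted to the procurement (sign-flipped) setting. Write $U_i(v_i) = E[p_i(v_i)] - v_i\bar{x}_i(v_i)$ for the interim utility of a truthful player of type $v_i$, and note that the interim utility of a type-$v_i$ player who instead reports $\hat v$ is $E[p_i(\hat v)] - v_i\bar{x}_i(\hat v)$, since the cost $v_i$ is incurred exactly when the mechanism buys, an event of probability $\bar{x}_i(\hat v)$. BIC is then the statement that for all $v_i,\hat v$, $U_i(v_i) \geq E[p_i(\hat v)] - v_i\bar{x}_i(\hat v)$.

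For the forward direction, I would write the two incentive constraints obtained by having type $v_i$ deviate to $v_i'$ and type $v_i'$ deviate to $v_i$. Subtracting them, the payment terms cancel and one is left with $(v_i' - v_i)(\bar{x}_i(v_i') - \bar{x}_i(v_i)) \le 0$, which is exactly monotonicity (i): $\bar{x}_i$ is decreasing. The same pair of inequalities also sandwiches the utility difference, giving, for $v_i < v_i'$, the bound $(v_i'-v_i)\bar{x}_i(v_i') \le U_i(v_i) - U_i(v_i') \le (v_i'-v_i)\bar{x}_i(v_i)$. Hence $U_i$ is locally Lipschitz, so absolutely continuous, and dividing by $v_i'-v_i$ and letting $v_i'\to v_i$ shows $U_i'(v_i) = -\bar{x}_i(v_i)$ wherever $\bar{x}_i$ is continuous, i.e.\ almost everywhere. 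Integrating from $v_i$ to $\infty$ and invoking the normalization $\bar{x}_i(\infty)=0$, $E[p_i(\infty)]=0$ (so $U_i(\infty)=0$) yields $U_i(v_i) = \int_{v_i}^\infty \bar{x}_i(t)\,dt$, which rearranges into identity (ii).

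For the converse, assume (i) and (ii). Substituting (ii) into the deviation utility gives $E[p_i(\hat v)] - v_i\bar{x}_i(\hat v) = (\hat v - v_i)\bar{x}_i(\hat v) + \int_{\hat v}^\infty \bar{x}_i(t)\,dt$, while truthful reporting gives $\int_{v_i}^\infty \bar{x}_i(t)\,dt$. The gain from telling the truth is therefore $\int_{v_i}^{\hat v}\bar{x}_i(t)\,dt - (\hat v - v_i)\bar{x}_i(\hat v)$, and the monotonicity of $\bar{x}_i$ makes this nonnegative in both cases $\hat v > v_i$ and $\hat v < v_i$ (compare the integrand against its endpoint value at $\hat v$). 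Thus truthful reporting is optimal and the mechanism is BIC.

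The main obstacle I anticipate is the envelope/integration step in the forward direction: justifying that $U_i$ is absolutely continuous so that the fundamental theorem of calculus applies, and that the improper integral up to $\infty$ converges and matches the boundary condition $U_i(\infty)=0$. The monotone sandwich bound on $U_i(v_i)-U_i(v_i')$ handles absolute continuity cleanly, and the assumed normalization $\bar{x}_i(\infty)=0$, $E[p_i(\infty)]=0$ supplies the boundary value; since $\bar{x}_i$ is monotone and bounded in $[0,1]$, integrability is automatic on any bounded interval and the representation $U_i(v_i)=\int_{v_i}^\infty\bar{x}_i$ forces the tail integral to converge. Care with the sign conventions (decreasing allocation, cost subtracted rather than value added) is the only place where the procurement setting genuinely departs from the textbook selling version.
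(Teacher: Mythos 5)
Your proof is correct, and its skeleton matches the paper's: sufficiency is shown by substituting the payment identity into the deviation utility and invoking monotonicity (identical to the paper's argument), and necessity starts from the same pair of incentive constraints, subtracting them to get monotonicity and sandwiching a difference to recover the payment formula. The one place you genuinely diverge is in how the necessity direction finishes. The paper sandwiches the payment difference $E[p_i(v_i)] - E[p_i(v_i')]$ directly, divides by $\epsilon$, passes to the limit to get the pointwise identity $v\,\frac{d\bar{x}_i(v_i)}{dv_i} = \frac{dE[p_i(v_i)]}{dv_i}$, and then integrates and applies integration by parts --- an argument that implicitly assumes $\bar{x}_i$ and $E[p_i(\cdot)]$ are differentiable, which is not justified (BIC alone only gives monotonicity of $\bar{x}_i$, hence differentiability almost everywhere at best). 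You instead sandwich the interim utility $U_i(v_i) = E[p_i(v_i)] - v_i\bar{x}_i(v_i)$, observe that the sandwich plus $\bar{x}_i \in [0,1]$ makes $U_i$ locally Lipschitz and hence absolutely continuous, identify $U_i' = -\bar{x}_i$ almost everywhere, and integrate against the boundary condition $U_i(\infty) = 0$. This envelope formulation buys you two things: it bypasses the integration by parts entirely, and it closes the regularity gap in the paper's limit argument, since absolute continuity is exactly what licenses recovering $U_i$ from its a.e.\ derivative. Your version is the more careful of the two.
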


To prove the optimality of selling mechanisms, Myerson~\cite{My81} introduces a virtual value function. The analogous function for procurement mechanisms is a \emph{virtual cost function}, which is $\phi(z) := z + \frac{F(z)}{f(z)}$. Thus, to ensure that $\phi(z)$ is well-defined and the integral of $f$ is well-defined (used in the proof of Lemma \ref{prop:w-opt} and Lemma \ref{prop:g-opt}), we assume 

\begin{asm}
Let $f$ be the density probability function of distribution $F$ with range $[a, b] \subseteq [0, \infty)$. $f$ is piecewise continuous and $f(z)$ is positive for all $z \in [a, b]$.
\end{asm}

A distribution $F$ is \emph{anti-regular} if $F$ satisfies Assumption 1 and $\phi(z)$ is increasing in $z$.\footnote{For selling mechanisms, a distribution is \emph{regular} if the \emph{virtual value} $\phi'(z) = z - \frac{1-F(z)}{f(z)}$ is increasing in $z$.} 

When the distribution $F$ is anti-regular, ~\cite{Elkind04} characterize the optimal dominant strategy truthful mechanism to minimize the expected payment for path auctions. Although their problem is not exactly the same as $w$-unit procurement mechanisms, their result can be extended to procurement mechanisms easily. For completeness, we provide a proof of the following lemma for our setting in the Appendix.


\begin{lemma}\label{prop:w-opt}
When the distribution $F$ is anti-regular, the optimal BIC $w$-unit procurement buys from the $w$ players with the smallest virtual cost. \qed
\end{lemma}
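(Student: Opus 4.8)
The plan is to prove Lemma~\ref{prop:w-opt} by following the classical Myerson-style argument, now specialized to procurement and adapted to the anti-regular hypothesis. First I would express the objective. By Lemma~\ref{prop:identity}(ii), for any BIC mechanism the expected payment to player $i$ is $E_{v_i}[p_i(v_i)] = E_{v_i}\big[v_i \bar{x}_i(v_i) + \int_{v_i}^{\infty}\bar{x}_i(t)\,dt\big]$. The standard move is to integrate the tail term by parts against the density $f$: writing out $\int_a^b f(v_i)\int_{v_i}^{\infty}\bar{x}_i(t)\,dt\,dv_i$ and swapping the order of integration (or integrating by parts using $F$ as the antiderivative of $f$) converts the double integral into $E_{v_i}\big[\frac{F(v_i)}{f(v_i)}\bar{x}_i(v_i)\big]$. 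Combining the two pieces yields the revenue-equivalence identity
\begin{align*}
E_{v_i}[p_i(v_i)] &= E_{v_i}\Big[\Big(v_i + \tfrac{F(v_i)}{f(v_i)}\Big)\bar{x}_i(v_i)\Big] = E_{v_i}[\phi(v_i)\,\bar{x}_i(v_i)],
\end{align*}
so that the total expected payment equals $\sum_i E[\phi(v_i)\bar{x}_i(v_i)]$, i.e.\ the expected virtual cost of the purchased set. Assumption~1 is exactly what guarantees $\phi$ is well defined and the integrals converge.

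Next I would argue that minimizing this expected virtual cost, subject to the feasibility constraint that the mechanism buys exactly $w$ units, is achieved pointwise: for each realized cost profile, to minimize $\sum_i \phi(v_i) x_i$ over $\{0,1\}$-allocations with $\sum_i x_i = w$, one simply buys the $w$ players with the smallest $\phi(v_i)$. Because $F$ is anti-regular, $\phi$ is increasing, so the $w$ players with smallest virtual cost are exactly the $w$ players with smallest actual cost $v_i$; this monotonicity is what makes the pointwise-optimal allocation implementable as a genuine mechanism.

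The remaining obligation is to verify that this greedy allocation rule actually defines a \emph{BIC} mechanism, i.e.\ that it satisfies condition~(i) of Lemma~\ref{prop:identity}: $\bar{x}_i(v_i)$ must be decreasing in $v_i$. This is where the anti-regularity is used a second time. Since a player is bought iff his virtual cost $\phi(v_i)$ is among the $w$ smallest, and since $\phi$ is increasing, raising $v_i$ can only push the player out of (never into) the bought set; integrating over $v_{-i}$ preserves this, so $\bar{x}_i$ is decreasing. Monotonicity then lets me \emph{define} the payments via the identity in~(ii), guaranteeing the mechanism is BIC and that its expected payment equals the virtual-cost lower bound we derived, so no BIC $w$-unit mechanism can do better.

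The main obstacle, and the step deserving the most care, is the interchange/integration-by-parts that produces $E[\phi(v_i)\bar{x}_i(v_i)]$ from the tail integral, together with checking that the boundary terms vanish under the normalization $\bar{x}_i(\infty)=0$, $E[p_i(\infty)]=0$ and Assumption~1; the pointwise minimization and the monotonicity check are then comparatively routine once anti-regularity is invoked. I would also note explicitly that without the anti-regular assumption the pointwise-optimal (smallest-virtual-cost) allocation need not be monotone in $v_i$, hence need not be BIC — which is precisely why the lemma restricts to anti-regular $F$ and why the following sections treat the non-monotone case separately.
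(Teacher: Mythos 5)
Your proposal is correct and follows essentially the same route as the paper's proof: both first establish the payment--virtual-cost identity $E[p_i(v_i)] = E[\phi(v_i)\bar{x}_i(v_i)]$ by combining Lemma~\ref{prop:identity}(ii) with an order-of-integration swap (the paper's Lemma~\ref{lem:vir}), then minimize the expected virtual cost pointwise and use anti-regularity to conclude that the resulting allocation is monotone and hence BIC. The only difference is one of detail: the paper executes the integration step piecewise over the intervals where $f$ is continuous and tracks the telescoping boundary terms explicitly, which is exactly the step you flag as requiring the most care.
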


Since $\phi(z)$ is increasing in $z$, the optimal mechanism buys from the first $w$ players. 
By Lemma \ref{prop:identity}, the expected payment for player $i \leq w$ is $v_{w+1}$. 
Thus, the total expected payment of the optimal BIC mechanism is $w E_{v \sim F}[v_{w+1}]$. Thus, our benchmark is $wE_{v \sim F}[v_{w+1}]$. We compare the expected payment of Mechanism 1 with $wE_{v \sim F}[v_{w+1}]$, when $F$ is anti-regular.

\begin{theorem}
When $F$ is anti-regular, Mechanism 1's expected payment is 2-approximate to the benchmark.
\end{theorem}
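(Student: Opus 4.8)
The plan is to observe that this theorem reduces, almost verbatim, to the envy-free case already settled in Theorem~\ref{thm:envy}, once we identify the anti-regular benchmark with the quantity $w\,E_{v\sim F}[v_{w+1}]$. By Lemma~\ref{prop:w-opt}, when $F$ is anti-regular the optimal BIC $w$-unit procurement mechanism buys from the $w$ players of smallest virtual cost; since anti-regularity means $\phi(z)=z+F(z)/f(z)$ is increasing in $z$, ``smallest virtual cost'' coincides with ``smallest reported cost,'' so the optimal mechanism procures from the cheapest $w$ players. This allocation rule is monotone decreasing in each $v_i$, hence implementable under the characterization of Lemma~\ref{prop:identity}, and plugging it into the payment identity there gives expected payment $v_{w+1}$ to each of the $w$ winners. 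Summing and taking expectation over $v\sim F$ yields a benchmark of exactly $w\,E_{v\sim F}[v_{w+1}]$, the same expression that served as the envy-free benchmark.

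With the benchmark identified, I would simply recall the two expressions for Mechanism~1's expected payment derived before Lemma~\ref{lem:2app}: it is $w\alpha$ when an $\alpha$ with $F(\alpha)=w/n$ exists, and $w\bigl(\alpha^+ - \tfrac{nc^-}{w}(1-\beta)(\alpha^+-\alpha^-)\bigr)$ otherwise. Lemma~\ref{lem:2app} states precisely that $E_{v\sim F}[v_{w+1}]\ge \tfrac12\alpha$ in the first case and $E_{v\sim F}[v_{w+1}]\ge \tfrac12\bigl(\alpha^+ - \tfrac{n}{w}c^-(1-\beta)(\alpha^+-\alpha^-)\bigr)$ in the second. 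Multiplying each inequality through by $w$ shows that in either case the benchmark $w\,E_{v\sim F}[v_{w+1}]$ is at least half of Mechanism~1's expected payment, which is exactly the claimed $2$-approximation. As in Theorem~\ref{thm:envy}, this argument is applied independently to each data type $j$ and the per-type bounds aggregate, so the full mechanism is $2$-approximate overall.

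There is essentially no new obstacle to surmount here: all the analytic work has been front-loaded into Lemma~\ref{prop:w-opt} (which needs anti-regularity to turn the virtual-cost-minimizing allocation into the cheapest-$w$ allocation) and into Lemma~\ref{lem:2app} (whose proof relies on non-negativity of costs and on the median of $\mathrm{Bin}(n,c^-)$ being at most $w$ when $\lceil nc^-\rceil \le w$). The only point requiring a line of justification is that the optimal anti-regular BIC mechanism and the envy-free lower bound of Lemma~\ref{prop:p-lower} both evaluate to $w\,E_{v\sim F}[v_{w+1}]$, so that the two benchmarks coincide and the identical comparison from the envy-free case transfers without loss.
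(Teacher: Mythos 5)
Your proposal is correct and takes essentially the same route as the paper: identify the anti-regular benchmark as $w\,E_{v\sim F}[v_{w+1}]$ via Lemma~\ref{prop:w-opt} together with the payment identity of Lemma~\ref{prop:identity}, then invoke Lemma~\ref{lem:2app} to get the factor of $2$. The only cosmetic difference is that you handle both cases of Lemma~\ref{lem:2app}, whereas the paper observes that anti-regularity implies Assumption~1, so $F^{-1}(\tfrac{w}{n})$ is well-defined and only the first case arises.
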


\begin{proof}
Since $F$ satisfies Assumption 1 by definition of anti-regular, $F^{-1}$ is well-defined. The total expected payment of Mechanism 1 is $wF^{-1}(\frac{w}{n})$. When $F$ is anti-regular, the benchmark is $w E_{v \sim F}[v_{w+1}]$. By Lemma \ref{lem:2app}, Mechanism 1 is 2-approximate.
\end{proof}

\subsection{General Distributions} When the distribution satisfies Assumption 1 but $\phi(z)$ is not increasing in $z$, buying from the $w$ players with smallest virtual cost may result in a non-truthful mechanism. We can use the ironing procedure, which is designed by Myerson~\cite{My81}, to resolve this issue. For a fixed cost vector $v$, ironing procedure irons on interval $[a, b)$, if $v_i \in [a, b)$, then $v_i$ is replaced by a random number $v_i'$, which is drawn from the distribution $F$ on $[a, b)$. By a way similar to Myerson's method, we can identify a set $S$ of intervals, such that the \emph{ironed virtual cost function} $\bar{\phi}(z) = E[\phi(z)]$ is increasing in $z$. Moreover, for an ironed interval $[a, b)$, $\bar{\phi}(z)$ is the same for all $z \in [a, b)$. 
The formal definitions of the ironed interval set $S$ and ironed virtual cost function are in the appendix.

\begin{lemma} \label{prop:g-opt}
The $w$-unit procurement mechanism that buys from the $w$ players with smallest ironed virtual cost and breaks ties uniformly at random is the optimal BIC mechanism when the distribution satisfies Assumption 1. \qed
\end{lemma}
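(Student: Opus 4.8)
The plan is to follow Myerson's ironing technique~\cite{My81}, adapted to procurement, by first reducing the expected payment of any BIC mechanism to an expected virtual cost and then minimizing that quantity pointwise. First I would start from the payment identity of Lemma~\ref{prop:identity}(ii). Taking expectation of $E[p_i(v_i)]$ over $v_i \sim F$ and integrating by parts (the boundary terms vanish because $F(a)=0$ and $\bar{x}_i(t)=0$ for $t>b$ by Assumption~1 and the convention $\bar{x}_i(\infty)=0$), the term $\int_{v_i}^\infty \bar{x}_i(t)\,dt$ contributes $E_{v_i}[\frac{F(v_i)}{f(v_i)}\bar{x}_i(v_i)]$. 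Hence the expected payment to player $i$ equals $E_{v_i}[\phi(v_i)\bar{x}_i(v_i)]$ with $\phi(z)=z+\frac{F(z)}{f(z)}$, and the total expected payment of any BIC mechanism is $\sum_i E_v[\phi(v_i)\,x_i(v)]$. This is the procurement analogue of revenue equivalence and it reduces the problem to minimizing expected virtual cost subject to the $w$-unit constraint and the monotonicity constraint of Lemma~\ref{prop:identity}(i).

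Next I would invoke the ironing construction (whose formal definition is in the appendix): changing variables to $q=F(v)$, set $H(q)=\int_0^q \phi(F^{-1}(s))\,ds$, let $\bar{H}$ be its lower convex hull, and define the ironed virtual cost by $\bar\phi(F^{-1}(q))=\bar{H}'(q)$, so that $\bar\phi$ is increasing, $\bar{H}\le H$ with equality off the ironed intervals $S$, and $\bar\phi$ is constant on each ironed interval. Integrating by parts in $q$ then yields the key ironing identity
\[
E_{v_i}[\phi(v_i)\bar{x}_i(v_i)] \;=\; E_{v_i}[\bar\phi(v_i)\bar{x}_i(v_i)] \;+\; \int_0^1 \bigl(\bar{H}(q)-H(q)\bigr)\,d\bar{x}_i(q).
\]
Because BIC forces $\bar{x}_i$ to be decreasing in $v_i$, and hence in $q$, we have $d\bar{x}_i \le 0$; combined with $\bar{H}-H\le 0$ the correction term is nonnegative. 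Summing over $i$ gives the lower bound that the expected payment of every BIC $w$-unit procurement mechanism is at least $\sum_i E_v[\bar\phi(v_i)\,x_i(v)]$.

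Finally I would show the proposed mechanism attains this bound. Since $\bar\phi$ is increasing, for each realization $v$ the sum $\sum_i \bar\phi(v_i)x_i(v)$ is minimized subject to $\sum_i x_i = w$ by buying from the $w$ players with smallest ironed virtual cost, which is exactly the proposed allocation; this is the same pointwise argument used in Lemma~\ref{prop:w-opt}, now applied to $\bar\phi$. That allocation is monotone, so defining payments by the identity in Lemma~\ref{prop:identity}(ii) makes it BIC. Crucially, because $\bar\phi$ is constant on each ironed interval and ties are broken uniformly at random, $\bar{x}_i$ is constant on every ironed interval, so $d\bar{x}_i$ is supported off $S$ where $\bar{H}=H$; thus the correction integral vanishes and equality holds in the ironing identity. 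Therefore the proposed mechanism meets the lower bound and is optimal. The main obstacle is the ironing step itself: establishing the convex-hull properties of $\bar{H}$ under the mere piecewise continuity of Assumption~1, justifying the Stieltjes integration by parts for a monotone (bounded-variation) $\bar{x}_i$, and simultaneously verifying that uniform tie-breaking both keeps $\bar{x}_i$ constant on ironed intervals (to achieve equality) and keeps it globally monotone (to preserve BIC); the reduction and the pointwise minimization are then routine.
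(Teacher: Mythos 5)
Your proposal is correct and takes essentially the same route as the paper's proof: reduce expected payment to expected virtual cost (Lemma~\ref{lem:vir}), integrate by parts to rewrite it as the expected ironed virtual cost plus a correction term, note that monotonicity of $\bar{x}_i$ (Lemma~\ref{prop:identity}) makes the correction non-negative for every BIC mechanism, and observe that uniform tie-breaking keeps $\bar{x}_i$ constant on each ironed interval so that the correction vanishes for the proposed mechanism. Your write-up is in fact slightly tighter: your correction term $\int_0^1\bigl(\bar{H}(q)-H(q)\bigr)\,d\bar{x}_i(q)$ carries the correct sign (the paper's final display writes $H(F(v_i))-G(F(v_i))$ where the integration by parts actually yields $G(F(v_i))-H(F(v_i))$), and you state explicitly both the resulting lower bound on all BIC mechanisms and the pointwise minimization of $\sum_i \bar{\phi}(v_i)x_i(v)$ subject to the $w$-unit constraint, which the paper leaves implicit.
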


Thus, our benchmark is the expected payment of the optimal BIC mechanism, $M$, when the distribution satisfies Assumption 1. In order to calculate the expected payment of $M$, we specify the payment rule as follows. Let $\bar{x}_i(v_i, v_{-i})$ be the probability that $M$ buys from player $i$, where the randomness comes from the mechanism. Since $M$ buys from the $w$ players with smallest ironed virtual cost,  $\bar{x}_i(v_i, v_{-i})$ is decreasing in $v_i$ for any fixed $v_{-i}$. Let $p_i(v_i, v_{-i})$ be the random variable denoting the payment for player $i$, where $E[p_i(v_i, v_{-i})] = v_i\bar{x}_i(v_i, v_{-i}) + \int_{v_i}^{\infty} \bar{x}_i(t, v_{-i})dt$ and the randomness comes from the mechanism. It is easy to see that this payment rule satisfies Lemma \ref{prop:identity}.

We compare the expected payment of Mechanism 1 with the benchmark.

\begin{theorem}
Let $F$ satisfy Assumption 1. Let $S$ be the set of ironed intervals for $F$. If every interval $[a, b) \in S$ satisfies $a \geq b/r$ for some $r > 1$, then the expected payment of Mechanism 1 is $2r$-approximate to the benchmark.
\end{theorem}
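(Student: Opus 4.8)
The plan is to bound Mechanism 1's expected payment against the benchmark by routing through the uniform-price quantity $wE_{v\sim F}[v_{w+1}]$, exactly as in the anti-regular case, and to absorb the non-monotonicity of the virtual cost into the extra factor $r$. Write $B$ for the benchmark, the expected payment of the optimal ironed mechanism of Lemma \ref{prop:g-opt}. Under Assumption 1 the density is positive on $[a,b]$, so $F$ is strictly increasing and $F^{-1}$ exists; hence Mechanism 1's expected payment is $w\alpha$ with $\alpha = F^{-1}(w/n)$. Lemma \ref{lem:2app} is proved only from the binomial median and uses no regularity, so it still gives $\alpha \le 2E_{v\sim F}[v_{w+1}]$, that is, $w\alpha \le 2wE_{v\sim F}[v_{w+1}]$. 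It therefore suffices to prove the single inequality $wE_{v\sim F}[v_{w+1}] \le rB$; chaining the two bounds then yields $w\alpha \le 2rB$, which is the claim.

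To prove this key inequality I would argue realization by realization. Fix a cost profile $v$ and let $[a',b')$ be the ironed cell of $S$ containing the order statistic $v_{w+1}$; if $v_{w+1}$ sits at a non-ironed point, set $a'=b'=v_{w+1}$, so that the hypothesis $a' \ge b'/r$ holds in every case. The optimal mechanism buys the $w$ players of smallest ironed virtual cost $\bar\phi$, breaking ties inside $[a',b')$ uniformly at random. I claim that, in expectation over the tie-breaking, every purchased player is paid at least $a'$, so the expected payment for this realization is at least $wa'$, whereas the uniform-price contribution is $wv_{w+1} < wb' \le r\,wa'$. Taking expectations over $v$ then gives $wE_{v\sim F}[v_{w+1}] \le rB$, as needed.

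The per-player bound is where Lemma \ref{prop:identity}(ii) does the work. Since $\bar\phi$ is nondecreasing and constant on $[a',b')$, a purchased player's allocation probability, viewed as a function of a hypothetical reported cost $t$, is the step function equal to $1$ for $t < a'$, equal to the tie-breaking value $\rho$ for $t \in [a',b')$, and equal to $0$ for $t \ge b'$ (lower ironed cells introduce no earlier drop, because a player stays fully bought until its cost reaches the threshold cell). Substituting this step function into $E[p_i(v_i)] = v_i\bar x_i(v_i) + \int_{v_i}^\infty \bar x_i(t)\,dt$ yields expected payment $a' + \rho(b'-a') \ge a'$ for a player with cost below $a'$, and expected payment $\rho b'$ for a player whose cost lies in $[a',b')$, that is, payment $b' \ge a'$ conditional on purchase. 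In either case a purchased player is paid at least $a'$; since exactly $w$ players are purchased in every realization, the per-realization expected payment is at least $wa'$.

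I expect the main obstacle to be handling, uniformly, the various positions the threshold can take: $v_{w+1}$ may lie strictly inside an ironed interval, at one of its endpoints, or at a non-ironed point, and several ironed intervals may sit entirely below the threshold, so that all of their players are purchased and must also be shown to be paid at least $a'$. The step-function description of $\bar x_i$ together with the payment identity of Lemma \ref{prop:identity} dissolves these cases at once, since in every case the allocation equals one below $a'$ and the identity then forces each purchased player's expected payment to be at least $a'$. Finally, the discrete and oracle versions of Mechanism 1 reduce to the continuous case by the same randomized-contract argument used for the earlier optimality theorems, so the $2r$ guarantee carries over.
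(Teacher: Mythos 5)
Your proposal is correct and follows essentially the same route as the paper's proof: both bound Mechanism 1's payment $wF^{-1}(w/n)$ via Lemma \ref{lem:2app}, and both establish the per-realization bound $E_M[\sum_i p_i^v] \geq w v_{w+1}/r$ for the ironed optimal mechanism using the payment identity of Lemma \ref{prop:identity}, with the factor $r$ absorbed through $a \geq b/r$ on the ironed cell containing $v_{w+1}$. The only difference is organizational: where the paper splits into three cases according to the positions of $v_w$ and $v_{w+1}$ relative to the ironed intervals, you merge them into a single step-function description of $\bar{x}_i$, which yields the same case-by-case computations (payment at least $a'$ for players below the cell, payment $b'$ conditional on purchase for players inside it).
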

\begin{proof}
Since $F$ satisfies Assumption 1, $F^{-1}$ is well-defined. The expected payment of Mechanism 1 is $wF^{-1}(\frac{w}{n})$. We compare the expected payment of the optimal BIC mechanism, $M$, with $wF^{-1}(\frac{w}{n})$. Let $p_i(v)$ be the random variable representing the payment for player $i$ in $M$ when the cost vector is $v$. We show that $E_{v\sim F, M}[\sum_{i \in [n]} p_i(v)] \geq E_{v \sim F}[v_{w+1}]/r$, which implies $E_{v\sim F, M}[\sum_{i \in [n]} p_i(v)] \geq F^{-1}(\frac{w}{n})/2r$ by Lemma \ref{lem:2app} and hence Mechanism 1 is $2r$-approximate.

There are two sources of randomness in mechanism $M$. One is from the cost vector $v$ since $v$ is drawn from a distribution $F$. Another one is $M$ itself since $M$ is a randomized mechanism. For a fixed cost vector $v$, let $p_i^v$ be the random variable representing the payment for player $i$, where the randomness only comes from $M$. We show that for any fixed $v$, $E_{M}[\sum_{i \in [n]} p_i^v] \geq wv_{w+1} / r$. This implies $E_{v\sim F, M}[\sum_{i \in [n]} p_i(v)] \geq E_{v \sim F}[v_{w+1}]/r$. There are three cases.  \\
\textbf{Case 1:} $v_{w+1}$ is not in any ironed interval. Since $M$ chooses the $w$ players with smallest ironed virtual costs and the ironed virtual cost is increasing, $M$ buys from the first $w$ players. For  player $i \leq w$, if $v_i$ increases to $t < v_{w+1}$, by the monotonicity of $\bar{\phi}$, the mechanism still buys from player $i$. That is $\bar{x}_i(t, v_{-i}) = 1$ for all $t < v_{w+1}$. When $t > v_{w+1}$, the mechanism will not buy from player $i$. Thus, by definition of the expected payment, the expected payment for each player $i \leq w$ is $v_i + \int_{v_i}^{\infty} \bar{x}_i(t, v_{-i}) dt = v_i + \int_{v_i}^{v_{w+1}} \bar{x}_i(t, v_{-i}) dt = v_{w+1}$. Since expected payment for player $i > w$ is 0, $E_M[\sum_{i \in [n]} p_i^v] = wv_{w+1}$.\\
\textbf{Case 2:} $v_{w+1}$ is in an ironed interval $[a, b)$ but $v_w \notin [a, b)$. Since for all player $i \leq w$, $v_i \notin [a, b)$, $M$ buys from the first $w$ players. For player $i \leq w$, $\bar{x}_i(t, v_{-i}) = 1$ for all $t < a$. By definition of the expected payment, the expected payment for player $i \leq w$ is $v_i + \int_{v_i}^{\infty} \bar{x}_i(t, v_{-i}) dt \geq v_i + \int_{v_i}^{a} \bar{x}_i(t, v_{-i}) dt = a$. Thus, $E_M[\sum_{i \in [n]} p_i^v] \geq wa \geq wb/r \geq wv_{w+1} / r$. \\
\textbf{Case 3:} $v_{w+1}$ and $v_{w}$ are in the same ironed interval $[a, b)$. Let $l_1 = |\{i: v_i < a\}|$ and $l_2 = |\{i: v_i \in [a, b)\}|$. Thus, $l_1 < w$ and $l_1 + l_2 > w$. The mechanism always buys from the first $l_1$ players. Since $\bar{\phi}(t)$ is the same for all $t \in [a, b)$ and the mechanism breaks ties uniformly at random, the mechanism buys from player $i$, $l_1 + 1 \leq i \leq l_1 + l_2$, with probability $\frac{w - l_1}{l_2}$. For player $i \leq l_1$, $\bar{x}_i(t, v_{-i}) = 1$ if $t < a$. By definition of the expected payment, the expected payment for player $i \leq l_1$ is $v_i + \int_{v_i}^{\infty} \bar{x}_i(t, v_{-i}) dt \geq v_i + \int_{v_i}^{a} \bar{x}_i(t, v_{-i}) dt = a$. For player $i$, $l_1 < i \leq l_1 + l_2$, when $v_i$ increases to $t < b$, since $\bar{\phi}(t)$ is the same for all $t \in [a, b)$, the probability that the mechanism buys from player $i$ does not change. That is, $\bar{x}_i(t, v_{-i}) = \frac{w - l_1}{l_2}$ if $t \in [a, b)$. By definition of the expected payment, the expected payment for player $i$, $l_1 < i \leq l_1 + l_2$, is $v_i\bar{x}_i(v_i, v_{-i}) + \int_{v_i}^{\infty} \bar{x}_i(t, v_{-i}) dt = \frac{b(w - l_1)}{l_2}$. Therefore, $E_M[\sum_{i \in [n]} p_i^v] \geq al_1 + l_2 \frac{b(w - l_1)}{l_2} \geq wa \geq wb/r \geq wv_{w+1} / r$.
\end{proof}

\bibliography{SellingPrivacy}{}
\bibliographystyle{plain}
\appendix

\section{Optimal Procurement Mechanisms}
We first characterize the BIC randomized procurement mechanisms in a way similar to Myerson's characterization of truthful selling mechanisms~\cite{My81}~\cite{AGT}. We assume $\bar{x}_i(\infty) = 0$ and $E[p_i(\infty)] = 0$.
\begin{lemma}[Lemma \textbf{\ref{prop:identity}}]
A randomized procurement mechanism is BIC if and only if for every $i$ the procurement probability $\bar{x}$ and payment $p$ satisfies \\
(i) $\bar{x}_i(v_i)$ is decreasing in $v_i$;\\
(ii) $E[p_i(v_i)] = v_i \bar{x}_i(v_i) + \int_{v_i}^{\infty} \bar{x}_i(t)dt$. 
\end{lemma}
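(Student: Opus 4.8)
The plan is to adapt Myerson's characterization of incentive-compatible mechanisms, in its envelope/integral form, to the procurement setting, where the monotonicity direction is reversed because players hold costs rather than values. Write $U_i(v_i) := E[p_i(v_i)] - v_i \bar{x}_i(v_i)$ for the interim utility of player $i$ when his true cost is $v_i$ and he decides truthfully (expectation taken over the mechanism and $v_{-i}$). More generally, reporting $b_i$ while having true cost $v_i$ yields interim utility $E[p_i(b_i)] - v_i \bar{x}_i(b_i)$, and BIC is precisely the statement that $U_i(v_i) \geq E[p_i(b_i)] - v_i \bar{x}_i(b_i)$ for all $v_i$ and all reports $b_i$.

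For the forward direction (BIC implies (i) and (ii)), I would first fix two types $v_i < v_i'$ and write the two incentive constraints stating that neither type wishes to mimic the other. Adding them yields $(v_i' - v_i)(\bar{x}_i(v_i) - \bar{x}_i(v_i')) \geq 0$, which forces $\bar{x}_i(v_i) \geq \bar{x}_i(v_i')$, so $\bar{x}_i$ is decreasing, giving (i). For (ii) I would use the envelope argument: since $U_i(v_i)$ is the pointwise supremum over $b_i$ of the functions $v_i \mapsto E[p_i(b_i)] - v_i \bar{x}_i(b_i)$, each affine and decreasing in $v_i$, it is convex with $U_i'(v_i) = -\bar{x}_i(v_i)$ almost everywhere. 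Integrating from $v_i$ to $\infty$ and using the boundary condition $\bar{x}_i(\infty) = 0$ and $E[p_i(\infty)] = 0$ (so $U_i(\infty) = 0$) gives $U_i(v_i) = \int_{v_i}^{\infty} \bar{x}_i(t)\,dt$, and substituting the definition of $U_i$ rearranges to exactly (ii).

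For the reverse direction ((i) and (ii) imply BIC), I would substitute the payment identity (ii) into the interim utility of reporting $b_i$, so that the gain from misreporting becomes $(b_i - v_i)\bar{x}_i(b_i) - \int_{v_i}^{b_i} \bar{x}_i(t)\,dt$, and show this is nonpositive. When $b_i > v_i$, monotonicity gives $\bar{x}_i(t) \geq \bar{x}_i(b_i)$ on $[v_i, b_i]$, so the integral dominates $(b_i - v_i)\bar{x}_i(b_i)$; when $b_i < v_i$, the symmetric bound $\bar{x}_i(t) \leq \bar{x}_i(b_i)$ on $[b_i, v_i]$ does the job. Both cases collapse to the single monotonicity fact, so truthful reporting is optimal.

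The two-type swap and the case analysis are routine. The main point requiring care is the boundary condition at $v_i = \infty$: the assumption $\bar{x}_i(\infty) = 0$ and $E[p_i(\infty)] = 0$ is exactly what pins down the additive constant $U_i(\infty) = 0$ in the integral characterization, and without it (ii) would hold only up to a constant. The second thing to keep straight throughout is the sign convention, since this is a cost rather than a value setting: monotonicity is \emph{decreasing} and the integral runs from $v_i$ up to $\infty$ rather than from $0$ to $v_i$, which is the only substantive difference from the familiar selling-side statement.
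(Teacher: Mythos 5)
Your proof is correct, and its overall skeleton matches the paper's: the necessity of (i) is obtained by the same two-type swap (adding the two incentive constraints), and your sufficiency argument is exactly the paper's, since your ``gain from misreporting'' $(b_i - v_i)\bar{x}_i(b_i) - \int_{v_i}^{b_i}\bar{x}_i(t)\,dt \leq 0$ is precisely the paper's verification that $\int_{v_i}^{\infty}\bar{x}_i(t)\,dt \geq \int_{v_i'}^{\infty}\bar{x}_i(t)\,dt + (v_i'-v_i)\bar{x}_i(v_i')$, split into the same two cases via monotonicity. The one genuine difference is how you derive the payment identity (ii) from BIC. The paper sandwiches $E[p_i(v_i)] - E[p_i(v_i')]$ between $v_i'(\bar{x}_i(v_i)-\bar{x}_i(v_i'))$ and $v_i(\bar{x}_i(v_i)-\bar{x}_i(v_i'))$, sets $v_i' = v_i + \epsilon$, divides by $\epsilon$, and lets $\epsilon \to 0$ to obtain the pointwise differential identity $v\,\frac{d\bar{x}_i(v_i)}{dv_i} = \frac{dE[p_i(v_i)]}{dv_i}$, which it then integrates (with integration by parts) against the boundary condition at infinity. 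You instead use the envelope argument: $U_i$ is a pointwise supremum of affine functions of $v_i$, hence convex, with $U_i'(v_i) = -\bar{x}_i(v_i)$ almost everywhere, and you integrate up to infinity. Your route actually buys some rigor the paper's argument lacks: the paper's limit step implicitly assumes $\bar{x}_i$ and $E[p_i(\cdot)]$ are differentiable at every point, which is not guaranteed for a merely monotone (possibly discontinuous) allocation rule, whereas convexity of $U_i$ gives almost-everywhere differentiability and local absolute continuity for free, so the integration step is legitimate. The only point to handle carefully in your version is the improper integral: you should apply the fundamental theorem on bounded intervals $[v_i, V]$ (justified by convexity) and then let $V \to \infty$, using the boundary assumption $\bar{x}_i(\infty) = 0$ and $E[p_i(\infty)] = 0$ to conclude $U_i(V) \to 0$, which pins down the constant exactly as you indicate.
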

\begin{proof}
($\rightarrow$) We need to show that for all $v_i'$, $E[p_i(v_i)] - v_i\bar{x}_i(v_i) \geq E[p_i(v_i')] - v_i\bar{x}_i(v_i')$. 
By (ii), it is equal to show $\int_{v_i}^{\infty} \bar{x}_i(t)dt \geq \int_{v_i'}^{\infty} \bar{x}_i(t)dt + (v_i' - v_i)\bar{x}_i(v_i')$. 
If $v_i' > v_i$, then it equals $\int_{v_i}^{v_i'} \bar{x}_i(t)dt \geq (v_i' - v_i)\bar{x}_i(v_i')$, which is true due to the monotonicity of $\bar{x}_i$. 
If $v_i' < v_i$, it equals $(v_i - v_i')\bar{x}_i(v_i') \geq \int_{v_i'}^{v_i} \bar{x}_i(t)dt$, which is true due to the monotonicity of $\bar{x}_i$.

($\leftarrow$) Since the mechanism is BIC, for all $v_i$ and $v_i'$, $E[p_i(v_i)] - v_i\bar{x}_i(v_i) \geq E[p_i(v_i')] - v_i\bar{x}_i(v_i')$. 
Symmetrically, we have $E[p_i(v_i)] - v_i'\bar{x}_i(v_i) \leq E[p_i(v_i')] - v_i'\bar{x}_i(v_i')$. 
By subtracting the inequalities, we get $(v_i' - v_i)\bar{x}_i(v_i) \geq (v_i' - v_i)\bar{x}_i(v_i')$, which implies (i). 
By rearranging these two inequalities, we get $v_i'(\bar{x}_i(v_i) - \bar{x}_i(v_i')) \geq E[p_i(v_i)] - E[p_i(v_i')] \geq v_i(\bar{x}_i(v_i) - \bar{x}_i(v_i'))$. 
Let $v_i' = v_i + \epsilon$, and divide all by $\epsilon$. 
When $\epsilon \rightarrow 0$, both sides have the same value. 
Thus, we get $v \frac{d\bar{x}_i(v_i)}{dv_i} = \frac{dE[p_i(v_i)]}{dv_i}$. 
Since $\bar{x}_i(\infty) = 0$ implies $E[p_i(\infty)] = 0$, we have $p_i(v_i) = \int_{\infty}^{v_i} v \bar{x}_i'(v_i)dv$. 
Applying integration by parts, we can get (ii).
\end{proof}

When the cost of players are drawn from a publicly known distribution $F$, we characterize the optimal BIC mechanism to minimize the payment, when $F$ is anti-regular. In~\cite{My81}, Myerson characterizes the optimal BIC mechanism to maximize the revenue for selling mechanisms assuming the distribution is regular. 
The proof of Lemma \ref{prop:2} follows the proof of Myersons's characterization~\cite{AGT}.

\begin{lemma} \label{prop:2} [Lemma \textbf{\ref{prop:w-opt}}]
When the distribution $F$ is anti-regular, the optimal BIC $w$-unit procurement buys from the $w$ players with the smallest virtual cost.
\end{lemma}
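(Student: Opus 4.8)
The plan is to follow Myerson's argument adapted to cost minimization. The characterization in Lemma \ref{prop:identity} already pins down the expected payment of \emph{every} BIC mechanism entirely in terms of its procurement rule $\bar{x}$, so the problem reduces to choosing a rule $\bar{x}$ that (a) is feasible for a $w$-unit mechanism, i.e.\ exactly $w$ of the $x_i$ equal $1$ on every profile, (b) is monotone (each $\bar{x}_i$ decreasing in $v_i$, so that it is implementable), and (c) minimizes the resulting total expected payment. The heart of the proof is to rewrite that payment as an expected \emph{virtual cost}.

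First I would take identity (ii) of Lemma \ref{prop:identity} and average over $v_i \sim F$:
\[
E_{v_i \sim F}\bigl[E[p_i(v_i)]\bigr] = \int_a^b \Bigl( v\,\bar{x}_i(v) + \int_v^{\infty} \bar{x}_i(t)\,dt \Bigr) f(v)\,dv .
\]
Swapping the order of integration in the double integral (Fubini), and using $F(a)=0$ together with the convention $\bar{x}_i(t)=0$ for $t>b$ (a deviation above the support is never bought; any leftover boundary term is nonnegative and vanishes for the rule we construct), turns $\int_a^b f(v)\int_v^{\infty}\bar{x}_i(t)\,dt\,dv$ into $\int_a^b \bar{x}_i(t)F(t)\,dt$. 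Collecting terms gives
\[
E_{v_i \sim F}\bigl[E[p_i(v_i)]\bigr] = \int_a^b \bar{x}_i(v)\,f(v)\Bigl( v + \tfrac{F(v)}{f(v)} \Bigr) dv = E_{v_i \sim F}\bigl[\bar{x}_i(v_i)\,\phi(v_i)\bigr],
\]
so player $i$'s expected payment is exactly the expected virtual cost at which he is bought. Summing over $i$ and unfolding $\bar{x}_i(v_i)=E_{v_{-i},M}[x_i]$, the total expected payment of any BIC mechanism equals $E_{v,M}\bigl[\sum_i x_i\,\phi(v_i)\bigr]$.

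I would then minimize this objective pointwise: for every fixed profile $v$, feasibility forces exactly $w$ of the $x_i$ to be $1$, so $\sum_i x_i\phi(v_i)$ is minimized by buying from the $w$ players with the smallest virtual cost $\phi(v_i)$. This yields a lower bound on the payment of every BIC $w$-unit mechanism, so it only remains to verify that this rule is itself a legal (BIC) mechanism rather than just an abstract pointwise minimizer.

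The main — and only delicate — step is implementability, and this is precisely where anti-regularity is used. Because $\phi$ is increasing, ordering players by virtual cost coincides with ordering them by cost, so ``buy from the $w$ smallest virtual costs'' is monotone: raising $v_i$ with $v_{-i}$ fixed can only move player $i$ out of the chosen set, hence $\bar{x}_i(v_i)$ is decreasing. By Lemma \ref{prop:identity}, pairing this monotone rule with the payment in (ii) yields a BIC mechanism whose expected payment attains the pointwise lower bound; therefore it is optimal. I expect this monotonicity check to be the crux, since without anti-regularity the pointwise minimizer can violate monotonicity and fail to be BIC — exactly the obstruction that ironing is introduced to repair in the general-distribution case.
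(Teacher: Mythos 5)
Your proposal is correct and follows essentially the same route as the paper's own proof: both use Lemma \ref{prop:identity} to rewrite each player's expected payment as his expected virtual cost $E_{v\sim F}[\phi(v_i)\bar{x}_i(v_i)]$ (the paper's Lemma \ref{lem:vir}, done by swapping the order of integration), then minimize the total expected virtual cost pointwise over feasible $w$-unit allocations, and finally invoke anti-regularity to conclude that the pointwise minimizer is monotone and hence BIC. The only superficial difference is bookkeeping: the paper partitions the support into intervals where $f$ is continuous before interchanging integrals, whereas you invoke Fubini directly and handle the tail term $\int_b^{\infty}\bar{x}_i$ explicitly as a nonnegative boundary contribution.
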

\begin{proof}
Let $\phi(z) = z + \frac{F(z)}{f(z)}$. Suppose that for any BIC mechanism, the expected payment is equal to its expected virtual cost, that is $E_{v \sim F}[\sum_{i \in [n]} p_i(v)] = E_{v \sim F}[\sum_{i \in [n]} \phi(v_i)x_i(v)]$. This implies that if the mechanism buys from $w$ players the with the smallest virtual cost, then the mechanism minimizes the payment. Moreover, since $F$ is anti-regular, $\phi(z) = z + \frac{F(z)}{f(z)}$ is increasing in $z$. Since the mechanism buys from $w$ players with smallest virtual cost, $\bar{x}_i(v_i)$ is decreasing in $v_i$ for all $i$. Hence, the mechanism is BIC. Thus, it suffices to show that $E_{v \sim F}[\sum_{i \in [n]} p_i(v)] = E_{v \sim F}[\sum_{i \in [n]} \phi(v_i)x_i(v)]$.

In order to show that the expected payment is equal to its expected virtual cost, it suffices to show that the expected payment of player $i$ is $E_{v \sim F}[\phi(v_i)\bar{x}_i(v_i)]$, since each $v_i$ is drawn from $F$ independently.

\begin{lemma}\label{lem:vir}
The expected payment of player $i$ is $E_{v \sim F}[\phi(v_i)\bar{x}_i(v_i)]$. 
\end{lemma}
\begin{proof}
Since the density function $f$ is piecewise continuous, there exists a partition $[a_1, b_1], \dots, [a_h, b_h]$ of $f$'s domain, such that $f$ is continuous within every interval $[a_i, b_i]$. Note that $b_i = a_{i+1}$ for all $1 \leq i \leq h - 1$.
{
\allowdisplaybreaks
\begin{align*}
E_{v}[p_i(v_i)] &= \sum_{j=1}^h \left(\int_{a_j}^{b_j} E[p_i(v_i)]f(v_i)dv_i \right)  \\
 &= \sum_{j=1}^h \left(\int_{a_j}^{b_j} v_i\bar{x}_i(v_i)f(v_i)dv_i + \int_{a_j}^{b_j} \int_{v_i}^{b_h} \bar{x}_i(z)f(v_i)dzdv_i\right)\text{\tag*{(Lemma \ref{prop:identity})}}\\
 &= \sum_{j=1}^h \left(\int_{a_j}^{b_j} v_i\bar{x}_i(v_i)f(v_i)dv_i \right. \\
& \text{\hspace{0.5cm}} \left. +\int_{a_j}^{b_j} \bar{x}_i(z) \int_{a_j}^{z} f(v_i)dv_i dz + \int_{b_j}^{b_h} \bar{x}_i(z) \int_{a_j}^{b_j} f(v_i)dv_i dz \right) \text{\tag*{(switch the order of integration)}}\\
 &= \sum_{j=1}^h \left(\int_{a_j}^{b_j} v_i\bar{x}_i(v_i)f(v_i)dv_i \right. \\
& \text{\hspace{0.5cm}} \left. +\int_{a_j}^{b_j} \bar{x}_i(z) (F(z) - F(a_j))dz + \int_{b_j}^{b_h} \bar{x}_i(z) (F(b_j) - F(a_j)) dz \right) \\
&= \sum_{j=1}^h \left(\int_{a_j}^{b_j} \bar{x}_i(v_i)(v_if(v_i) + F(v_i))dv_i \right. \\
& \text{\hspace{0.5cm}} \left. - F(a_j)\int_{a_j}^{b_j} \bar{x}_i(v_i) dv_i + (F(b_j) - F(a_j))\int_{b_j}^{b_h} \bar{x}_i(v_i) dv_i \right) \\
&= \sum_{j=1}^h \left(\int_{a_j}^{b_j} \bar{x}_i(v_i)(v_if(v_i) + F(v_i))dv_i - F(a_j)\int_{a_j}^{b_h} \bar{x}_i(v_i) dv_i + F(b_j)\int_{b_j}^{b_h} \bar{x}_i(v_i) dv_i \right) \\
&= \sum_{j=1}^h \left(\int_{a_j}^{b_j} \bar{x}_i(v_i)(v_if(v_i) + F(v_i))dv_i \right) \\
& \text{\hspace{0.5cm}}  - \sum_{j=1}^h \left( F(a_j) \int_{a_j}^{b_h} \bar{x}_i(v_i) dv_i - F(b_j)\int_{b_j}^{b_h} \bar{x}_i(v_i)dv_i \right) \\
&= \sum_{j=1}^h \left(\int_{a_j}^{b_j} \bar{x}_i(v_i)(v_i + \frac{F(v_i)}{f(v_i)})f(v_i)dv_i\right) \\
& \text{\hspace{0.5cm}} - \sum_{j=1}^h \left(F(a_j)\int_{a_j}^{b_h} \bar{x}_i(v_i) dv_i - F(b_j)\int_{b_j}^{b_h} \bar{x}_i(v_i) dv_i \right) \\
 &= E_{v_i}[\phi(v_i) \bar{x}_i(v_i)] \text{\tag*{\hspace{1cm}  ($F(a_1) = 0$, $b_i = a_{i+1}$ for all $i \leq i \leq h-1$) \qedhere}}
\end{align*}
}
\end{proof}
\hfill (End of proof of Lemma \ref{prop:2}) \qedhere
\end{proof}

Now, we consider the case that $F$ satisfies Assumption 1 but $\phi(z)$ is not monotone in $z$. For selling mechanisms, Myerson~\cite{My81} designs an ironing procedure to get the optimal BIC mechanism to maximize the revenue when $F$ satisfies Assumption 1. We show how to iron virtual values in the setting of procurement mechanism and use this to design an optimal BIC mechanism to minimize the payment.

Suppose that the $\phi(z)$ is not monotone. We want to transform $\phi(z)$ to another function $\bar{\phi}(z)$, such that $\bar{\phi}(z)$ is increasing in $z$. Let $q = F(v)$ and $h(q) = \phi(F^{-1}(q))$. Since the density function $f$ is always positive, $F$ is a strictly increasing. Thus, $\phi(z)$ is increasing in $z$ if and only if $h(q)$ is increasing in $q$. Moreover, $h(q)$ is increasing in $q$ if and only if $H(q) = \int_0^q h(t)dt$ is convex. However, $H$ is not convex, since $\phi(z)$ is not monotone. Thus, we want to modify $H$ to get a convex function $G$ and define $\bar{\phi}(z)$ based on $G$.

Let $S$ be the \emph{epigraph} of $H$, that is $S = \{(q, y) \mid y \geq H(q) \}$. 
Geometrically, if we draw $y = H(q)$ on a plane, then $S$ is the area containing $H$ and above $H$. 
Let conv ($S$) denote the convex hull of set $S$. 
The \emph{convex hull} of $H(q)$ is $G(q) = \min\{y \mid (q, y) \in \textrm{conv ($S$)} \}$ (Chapter 5 in~\cite{Rock97}). 
Geometrically, if we draw $y = G(q)$ on a plane, then $G$ is the lower boundary of conv ($S$). 
By definition, a function is convex if its epigraph is a convex set. 
Since the epigraph of $G$, conv ($S$), is a convex set, $G$ is convex. 
Since $G$ is the lower boundary of conv ($S$), $G(q) \leq H(q)$ for all $q \in [0, 1]$.

We define the ironed interval set and $\bar{\phi}(z)$ as follows.
Let $T$ be the set of points that $H(q)$ and $G(q)$ differ, that is, $T = \{q \mid  H(q) \neq G(q)\}$. 
Let $S$ be the smallest set of intervals $[y_i, z_i)$, such that $T = \cup_i (y_i, z_i)$. 
The \emph{ironed interval set} is defined as $\{\, [F^{-1}(y_i), F^{-1}(z_i)) \mid [y_i, z_i) \in S \}$.
Since $G$ is convex, $G$ is differentiable on a dense subset of $[0, 1]$ by Theorem 25.5 in \cite{Rock97}.
We define $g(q) := \frac{dG}{dq}(q)$, whenever $\frac{dG}{dq}(q)$ is well-defined, and extend $g$ to $[0, 1]$ by right-continuity.
The \emph{ironed virtual cost} function is defined as $\bar{\phi}(z) = g(F(z))$.

\begin{lemma}[Lemma \textbf{\ref{prop:g-opt}}]
The $w$-unit procurement mechanism that buys from the $w$ players with smallest ironed virtual cost and breaks ties uniformly at random is the optimal BIC mechanism when the distribution satisfies Assumption 1.
\end{lemma}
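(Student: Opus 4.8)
The plan is to reduce payment minimization to a pointwise optimization over the \emph{ironed} virtual cost, mirroring Myerson's argument but in the procurement (minimization) direction. By Lemma~\ref{lem:vir} the expected payment of any BIC mechanism equals $E_{v\sim F}[\sum_i \phi(v_i)\bar{x}_i(v_i)]$, so by Lemma~\ref{prop:identity} it suffices to minimize this quantity over all allocation rules for which each $\bar{x}_i(v_i)$ is decreasing in $v_i$ and exactly $w$ units are purchased for every cost profile. First I would change variables via $q=F(v_i)$, writing $\hat{x}_i(q):=\bar{x}_i(F^{-1}(q))$, so that $E[\phi(v_i)\bar{x}_i(v_i)]=\int_0^1 h(q)\hat{x}_i(q)\,dq$ and $E[\bar{\phi}(v_i)\bar{x}_i(v_i)]=\int_0^1 g(q)\hat{x}_i(q)\,dq$, recalling that $h=H'$ and $g=G'$.

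The heart of the argument is the inequality $E[\phi(v_i)\bar{x}_i(v_i)]\ge E[\bar{\phi}(v_i)\bar{x}_i(v_i)]$ for every monotone allocation. I would establish it by integration by parts, $\int_0^1 (h-g)\hat{x}_i\,dq = \bigl[(H-G)\hat{x}_i\bigr]_0^1-\int_0^1 (H-G)\,\hat{x}_i'\,dq$. The boundary term vanishes because $G(0)=H(0)=0$ and $G(1)=H(1)$, and since $G$ is the convex hull of $H$ we have $H-G\ge 0$ everywhere; moreover $\hat{x}_i$ is decreasing in $q$ (as $\bar{x}_i$ is decreasing in $v_i$ and $F$ is increasing), so $\hat{x}_i'\le 0$ and the remaining integral is nonpositive, giving $\int_0^1(h-g)\hat{x}_i\,dq\ge 0$. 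Summing over $i$, the expected payment of any BIC mechanism is at least $E_{v\sim F}[\sum_i \bar{\phi}(v_i)\bar{x}_i(v_i)]$.

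Next I would lower-bound $E[\sum_i \bar{\phi}(v_i)\bar{x}_i(v_i)]$ by its pointwise minimum: for each fixed $v$, subject to $\sum_i x_i=w$, the sum $\sum_i \bar{\phi}(v_i)x_i$ is minimized by buying from the $w$ players with smallest ironed virtual cost. Finally I would verify that the proposed mechanism attains both bounds. Because $\bar{\phi}$ is increasing, the mechanism's allocation $\bar{x}_i$ is decreasing in $v_i$, so it is BIC by Lemma~\ref{prop:identity}, and it realizes the pointwise minimum just described. Equality in the integration-by-parts step holds because $H-G$ is supported on the ironed intervals, where $\bar{\phi}$ is constant; uniform tie-breaking makes the selection probability $\hat{x}_i$ constant across each such interval, so $\hat{x}_i'=0$ wherever $H-G>0$ and the integral vanishes. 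Hence the proposed mechanism matches the lower bound and is optimal.

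The main obstacle will be the rigorous handling of the ironed intervals: justifying the integration by parts when $G$ is only differentiable on a dense subset of $[0,1]$ (Theorem 25.5 in~\cite{Rock97}) and $f$ is merely piecewise continuous, confirming $G(1)=H(1)$ from the convex-hull construction, and --- most delicately --- showing that the random tie-breaking really does yield a buying probability that is constant in $v_i$ across an ironed interval, so that equality is achieved. I would address the last point by observing that within an ironed interval all players look identical to the mechanism, having equal ironed virtual cost; thus, conditioned on how many players fall in the interval, the marginal probability of purchasing from any particular one depends only on that count and not on its exact cost, making $\hat{x}_i$ flat on the interval.
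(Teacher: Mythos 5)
Your proposal is correct and takes essentially the same route as the paper's proof: the same reduction via Lemmas~\ref{prop:identity} and~\ref{lem:vir}, the same integration by parts against $H-G$ in quantile space, and the same equality argument (uniform tie-breaking makes $\bar{x}_i$ constant on each ironed interval, so the correction term vanishes for the proposed mechanism). If anything, your signs are the cleaner ones: the paper's final display writes the correction term as $+\sum_j \int_{a_j}^{b_j}\bigl(H(F(v_i))-G(F(v_i))\bigr)\,d\bar{x}_i(v_i)$, which with $d\bar{x}_i\le 0$ would be nonpositive and make the ``minimize the magnitude of the second term'' step point the wrong way for a minimization problem, whereas with the sign as you derive it the correction is nonnegative, so every BIC mechanism pays at least $E_{v\sim F}[\sum_i \bar{\phi}(v_i)\bar{x}_i(v_i)]$ and the proposed mechanism attains this lower bound pointwise.
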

\begin{proof}
Since $G$ is a convex function, $g(q)$ is increasing in $q$. Thus, $\bar{\phi}(z)$ is increasing in $z$. Since $\bar{\phi}(z)$ is increasing in $z$ and the mechanism buys from the $w$ players with smallest ironed virtual cost, the mechanism is BIC. We only need to show that the mechanism minimizes the payment. First, we want to relate the mechanism's payment to $\bar{\phi}(z)$. Since the density function $f$ is piecewise continuous, there exists a partition $[a_1, b_1], \dots, [a_h, b_h]$ of $f$'s domain, such that $f$ is continuous within every interval $[a_i, b_i]$. Note that $b_i = a_{i+1}$ for all $1 \leq i \leq h - 1$. For any BIC mechanism, $\bar{x}_i(v_i)$ is decreasing in $v_i$ by Lemma \ref{prop:identity}. For a fixed $\bar{x}_i$,
{
\allowdisplaybreaks
\begin{align*}
E_{v \sim F}[p_i( v_i)] &= E_{v}[\phi(v_i)\bar{x}_i(v_i)] \text{\tag*{(Lemma \ref{lem:vir})}}\\
 &= E_{v}[\bar{\phi}(v_i)\bar{x}_i(v_i)] - E_{v}[(\bar{\phi}(v_i)-\phi(v_i))\bar{x}_i(v_i)] \\
 &= E_{v}[\bar{\phi}(v_i)\bar{x}_i(v_i)] - \sum_{j=1}^h \int_{a_j}^{b_j} (\bar{\phi}(v_i) - \phi(v_i))\bar{x}_i(v_i)f(v_i) dv_i \\
 &= E_{v}[\bar{\phi}(v_i)\bar{x}_i(v_i)] - \sum_{j=1}^h \int_{a_j}^{b_j} (g(F(v_i)) - h(F(v_i)))\bar{x}_i(v_i)f(v_i) dv_i \\
 &= E_{v}[\bar{\phi}(v_i)\bar{x}_i(v_i)] - \sum_{j=1}^h (G(F(v_i)) - H(F(v_i)))\bar{x}_i(v_i)|_{v_i =a_j}^{b_j} \\ 
 & \hspace{0.5cm} + \sum_{j=1}^h \int_{a_j}^{b_j} (H(v_i) - G(v_i)) d\bar{x}_i(v_i) \text{\tag*{(integration by parts)}}\\
  &= E_{v}[\bar{\phi}(v_i)\bar{x}_i(v_i)] + \sum_{j=1}^h \int_{a_j}^{b_j} (H(F(v_i)) - G(F(v_i))) d\bar{x}_i(v_i)
\end{align*}
}

The last equality holds since $G(0) = H(0)$ and $G(1) = H(1)$ by the definition of $G$ and $b_i = a_{i+1}$ for all $1 \leq i \leq h - 1$. In the second term of the last line, the derivative of $\bar{x}_i$ is non-positive, since $\bar{x}_i(v_i)$ is decreasing in $v_i$. Moreover, $H(F(v_i)) - G(F(v_i))$ is non-negative for all $v_i$, because $G(q) \leq H(q)$ for all $q \in [0, 1]$. In order to minimize the payment, we need to choose an allocation function $\bar{x}_i$ to minimize the magnitude of the second term. We show that the second term is zero when the mechanism buys from the $w$ players with smallest ironed virtual cost and breaks ties uniformly at random.

For any $q \in [0, 1]$, if $H(q) - G(q)$ is zero, then the contribution to the second term is zero. Thus, we only need to consider where $G$ and $H$ differ. Since $G$ is the convex hull of $H$, whenever $G < H$, $G$ must be flat. That is, for any $[a, b) \in S$, $g(q)$ has the same value for all $q \in [a, b)$. Since $\bar{\phi}(F^{-1}(q)) = g(q)$, every $v_i \in [F^{-1}(a), F^{-1}(b))$ has the same ironed virtual cost. Since the mechanism breaks ties uniformly at random, $\bar{x}_i(v_i)$ is constant for all $v_i \in [F^{-1}(a), F^{-1}(b))$. Thus, the derivative of $\bar{x}_i(v_i)$ is zero for all $v_i \in [F^{-1}(a), F^{-1}(b))$. Since $\bar{x}_i(v_i)$ is zero for all $v_i \in [F^{-1}(a), F^{-1}(b))$, it contributes nothing to the second term. Thus, the second term is always zero since if $H(F(v_i)) - G(F(v_i))$ is non-zero, then $\bar{x}_i(v_i)$ is zero. Hence, the mechanism minimizes the payment.
\end{proof}
\end{document}